\pgfplotsset{compat=1.16,every axis/.style={width=5cm}}
\newtheorem{mylem}{Lemma}
\newtheorem{mycor}{Corollary}
\title{Consistent Covariance estimation for stratum imbalances under minimization method for covariate-adaptive randomization}
\author[1]{Zixuan Zhao}
\author[1]{Yanglei Song}
\author[1]{Wenyu Jiang}
\author[1,2]{Dongsheng Tu}
\affil[1]{Department of Mathematics and Statistics, Queen's University, Kingston, Ontario, K7L 3N8, Canada}
\affil[2]{Canadian Cancer Trials Group, Kingston, Ontario, K7L 2V5, Canada}
\runningauthor{Z. Zhao et al.}
\begin{document}

\maketitle

\begin{abstract}

Pocock and Simon's minimization method is a popular approach for covariate-adaptive randomization in clinical trials. Valid statistical inference with data collected under the minimization method requires the knowledge of the limiting covariance matrix of within-stratum imbalances, whose existence is only recently established. In this work, we propose a bootstrap-based estimator for this limit and establish its consistency, in particular, by Le Cam's third lemma.  As an application, we consider in simulation studies adjustments to existing robust tests for treatment effects with survival data by the proposed estimator. It shows that the adjusted tests achieve a size close to the nominal level, and unlike other designs, the robust tests without adjustment may have an asymptotic size inflation issue under the minimization method. 
\keywords{covariance matrix estimation, covariate-adaptive design, minimization method, parametric bootstrap, size inflation, within-stratum imbalances}
\end{abstract}

\section{Introduction}
    In clinical trials evaluating a new treatment, randomization is used to balance both observed and unobserved confounding variables. The simple randomization, which assigns each subject to one of the treatment groups  independently with a fixed probability, is the most popular, but, in practice, may lead to an imbalance among the number of subjects in treatment groups by chance. Adaptive randomization which allocates subjects sequentially based on the previous assignments are proposed to reduce the imbalance; examples include the biased coin design \citep{efron}, the permuted block design \citep{zelen1974randomization},  and the urn design \citep{wei1978adaptive}. Further, to reduce bias in comparison between treatment groups, practitioners usually hope for a balance among some prespecified and important stratification factors or covariates, such as age and severity of illness, which can be achieved by covariate-adaptive randomization. One approach is to apply one of the above adaptive randomization approaches to each stratum of the stratification factors, which leads to the stratified version of the biased coin, urn, block designs; see, e.g., \cite{breugom2015adjuvant,stott2017thyroid}.  The minimization method, first proposed by \cite{taves} and extended by \cite{pocock}, is another widely used approach especially when the number of covariate strata is large;  see, e.g., \cite{hunt1992isis,van2010randomized}. Instead of balanced allocation within each stratum, it aims for marginal balance for the stratification factors and is thus also known as the marginal method. The popularity of the minimization method is noted in \cite{taves2} that it has been used over $500$ times during $1989$-$2008$.

For data collected under covariate-adaptive randomization, due to correlated treatment assignments, adjustments are required for inferential procedures which are developed for the simple randomization \citep{shao,xu2016validity,bugni2018inference,bugni2019inference,hu,ye,ma2020statistical,li,bugni2023inference,ye2022inference,liu2023lasso}. 
    Specifically, denote by $\bm{Z}_1,\ldots,\bm{Z}_n$
    a sequence of discrete vectors that are used in covariate-adaptive randomization. For instance, in \citet[Table 1]{alphonso2005prospective}, a study on lung resection uses the minimization method and controls for several factors including smoking status, age and sex, each of which has two levels $\{0,1\}$ representing smokers/non-smokers, age$>$median/age$\leq$median and males/females respectively; then $\bm{Z}_1 = (1,1,0)$ indicates that the first subject is a young male non-smoker. 
    These covariate vectors  are independent with a common probability mass function (pmf) $\bm{p}_0$ defined over all the strata, $\mathcal{Z}$. Further, for each $\bm{z} \in \mathcal{Z}$, let $\bm{S}_n(\bm{z})$ be the difference between the number of subjects in the treatment groups among those within stratum $\bm{z}$, after the assignments of  $n$-subjects.  Finally, denote by $\bm{\Sigma}_n$ the covariance matrix of the within-stratum imbalances $n^{-1/2}\bm{S}_n$,  where $\bm{S}_n = (\bm{S}_n(\bm{z}): \bm{z} \in \mathcal{Z})$,
    and by $\bm{\Sigma}_\infty$ the large-$n$ limit of $\bm{\Sigma}_n$ if it exists. 
    A  consistent estimator for the limiting covariance matrix $\bm{\Sigma}_\infty$ is needed to develop valid robust  tests for treatment effects, e.g., under generalized linear models with omitted variables \citep{li} or with survival data under possibly mis-specified models \citep{ye,johnson2022validity}.
    
    For most designs of the covariate-adaptive randomization,  the limit $\bm{\Sigma}_\infty$ exists and takes the form $v \times \text{diag}(\bm{p}_0)$, where  $v$ is a design-specific scalar, and 
    $\text{diag}(\bm{p}_0)$  a diagonal matrix with the diagonal vector being $\bm{p}_0$.
    Specifically, it is known that $v = 0$ for the stratified permuted block design and biased coin design \citep{efron,baldi}, $v = 1/3$ for the stratified urn design \citep{wei1988properties}, and $v = 1$ for the simple randomization.  
    However, for the minimization method, the existence of $\bm{\Sigma}_\infty$ is established only recently by \cite{hu}. Specifically, it is proved in \cite{hu} that for each  $\bm{z} \in \mathcal{Z}$, the within-stratum variance, $\text{var}(n^{-1/2}\bm{S}_n(\bm{z}))$, converges to a stratum specific limit $\sigma^2_{{z}}$. With similar techniques and the Cram\'er-Wold Device,
    one may show that the covariance matrix of $n^{-1/2}\bm{S}_n$ indeed converges to $\bm{\Sigma}_\infty$, whose computation, however, requires solving Poisson equations associated with an induced  Markov chain, and is challenging even when the dynamics of the Markov chain are given; see, e.g., \cite{douc2022solving}. 
    
    Our main contribution in this paper is in proposing a consistent estimator for $\bm{\Sigma}_\infty$ under the minimization method, without assuming the pmf $\bm{p}_0$  known. Indeed, if $\bm{p}_0$ is given, one may estimate the covariance matrix $\bm{\Sigma}_n$  by Monte Carlo simulations. Specifically, for each of $B$ independent repetitions, we generate $n$ observations from the pmf $\bm{p}_0$, make assignments for these subjects according to the minimization method, and calculate the imbalances $\bm{S}_n$; finally, we estimate $\bm{\Sigma}_n({p}_0)$ by the empirical covariance matrix $\hat{\bm{\Sigma}}_{n}^{B}(\bm{p}_0)$ of $n^{-1/2}\bm{S}_n$ from the $B$ repetitions.
    However, $\bm{p}_0$ is a population-level quantity that is usually unknown and has to be estimated in practice.  We propose to replace $\bm{p}_0$ in the above procedure with some estimator $\hat{\bm{p}}_n$. We establish that if $n^{1/2}(\hat{\bm{p}}_n - \bm{p}_0)$ converges in distribution, then this procedure produces a consistent estimator for the limiting covariance matrix, that is,
     $\hat{\bm{\Sigma}}_{n}^{B}(\hat{\bm{p}}_n) \to \bm{\Sigma}_\infty$ in probability as  both $n$ and $B$ diverge to infinity. One obvious choice for $\hat{\bm{p}}_n$ would be the empirical pmf of $\bm{Z}_1,\ldots,\bm{Z}_n$, while 
     more efficient estimators may be obtained under some structural assumptions, such as independence among stratification factors, or with access to additional data. In terms of techniques, our proofs rely on Le Cam's third lemma, the Lyapunov techniques in \cite{hu} and properties of Poisson equation solutions associated with symmetric Markov chains.

    As an application, we  consider time-to-event data collected under the minimization method and the adjustments to the robust tests for treatment effects proposed by \cite{lin}. \cite{ye} and \cite{johnson2022validity} developed an asymptotically valid adjustment that requires a consistent estimator for $\bm{\Sigma}_{\infty}$. Specifically, under the assumption that each stratum has the same prevalence, in which case the pmf $\bm{p}_0$ is \textit{known}, \cite{johnson2022validity} establishes a structural result for $\bm{\Sigma}_{\infty}$, and estimates it by a Monte Carlo approach. 
    In this work, we consider general, unknown $\bm{p}_0$, and use the proposed  consistent estimator $\hat{\bm{\Sigma}}_{n}^{B}(\hat{\bm{p}}_n)$ for adjustment. In simulations, we observe that the size  of the adjusted test from our approach   is close  to the nominal level for a moderate sample size $n$.  
    Further, again by numerical studies, we show that the robust tests \citep{lin} without adjustment may have an asymptotically inflated size under the minimization method. This is because unlike other designs, the difference between the limiting covariance matrices under the simple randomization  and the minimization method may not be positive semidefinite, which we also confirm by simulations.
    
    The paper is organized as follows. In Section \ref{sec:mdprop}, we recall the minimization method, propose under this setup an estimator for the limiting covariance matrix of the within-stratum imbalances and further show that the estimator is asymptotically consistent. In Section \ref{sec:survival}, we consider adjustments to statistical tests in survival analysis using the proposed estimator and conduct simulation studies to evaluate their finite sample performance in terms of  size and power. 
    Omitted proofs are provided in Appendix.


\section{Minimization method and consistent estimators for limiting covariance matrix} \label{sec:mdprop}
\subsection{Minimization method} \label{subsec:designnotationsandprocedure}

Consider $n$ subjects with $K$-dimensional \textit{discrete} covariate vectors $\bm{Z}_1,\ldots, \bm{Z}_n$, which are used in making treatment assignments  and assumed to be independent and identically distributed. We denote by $\bm{Z}$ a generic covariate vector,
and use superscripts to index components of a vector, for instance, $\bm{Z} = (\bm{Z}^1,\ldots,\bm{Z}^K)$. 
For each stratification factor $k \in \{1,\ldots, K\} = [K]$, denote by  $\mathcal{Z}^{k}$ the levels for the $k$-th factor $\bm{Z}^k$, and by $m_k = |\mathcal{Z}^{k}|\geq 2$ its size. 
Further, we denote by $\mathcal{Z} = \mathcal{Z}^1\times \ldots\times \mathcal{Z}^k$  all strata of $\bm{Z}$, and by $m = m_1 \times \ldots \times m_K$ the number of strata, which is assumed to be fixed and does not change with the sample size $n$.
Finally, we denote by $\bm{p}_0$ the probability mass function  of $\bm{Z}$ and assume, for simplicity, that
$\bm{p}_0(\bm{z})= \mathbb{P}(\bm{Z}=\bm{z}) > 0$ for each $\bm{z}\in \mathcal{Z}$;  otherwise, we could drop those strata from $\mathcal{Z}$ with zero probability.

We consider a clinical trial with two treatment groups. For each subject $i\in [n]$, denote by $I_i$  the indicator for its assignment: $I_i=1$ if the $i$-th subject is assigned to (active) treatment, and $I_i = 0$ if control. The minimization method \citep{pocock} allocates subjects \textit{sequentially} based on the previous assignments and marginal imbalances. Specifically,  for $i \in [n]$, assume that assignments for the first $(i-1)$ subjects have been completed, and we consider the following procedure for allocating the $i$-th subject.

\begin{itemize}[leftmargin=*]
    \item[] Step 1. For each $k \in [K]$ and  $\bm{z}^k \in \mathcal{Z}^k$, compute the marginal imbalance, $\bm{M}_{i-1}(k,\bm{z}^k)$, for the $k$-th stratification factor at level $\bm{z}^k$  after the completion of  first $(i-1)$ assignments, i.e.,
    \begin{equation}
    \label{marginalimb}
       \bm{M}_{i-1}(k, \bm{z}^k) = \sum_{j\leq i-1}\mathbb{1}(\bm{Z}_j^k=\bm{z}^k)\{I_j - (1-I_{j})\},
    \end{equation}
   with the convention that $\bm{M}_{0}(k,\bm{z}^k) = 0$, where $\mathbb{1}\left(\cdot\right)$ denotes the indicator function.
   
   \item[] Step 2. Given a  \textit{positive} weight vector $\bm{\omega} = (\bm{\omega}^1,\ldots,\bm{\omega}^K)$, compute  the potential weighted imbalance 
   measure $\text{Imb}_i^{(0)}$ (resp.~$\text{Imb}_i^{(1)}$)  if the $i$-th subject is assigned to control (resp.~treatment), i.e.,
   \begin{equation}\label{Imbalance measure}
        \text{Imb}_i^{(0)} = \sum_{k=1}^{K}\bm{\omega}^k\{\bm{M}_{i-1}(k,\bm{Z}_i^k)-1\}^2,
        \quad \text{Imb}_i^{(1)} = \sum_{k=1}^{K}\bm{\omega}^k\{\bm{M}_{i-1}(k,\bm{Z}_i^k)+1\}^2.
  \end{equation}

     \item[] Step 3. Assign the $i$-th subject to the treatment group with probability $g(\text{Imb}^{(1)}_i-\text{Imb}^{(0)}_i)$, where $g:\mathbb{R} \to (0,1)$ is a user-given function such that  $g(x)=1-g(-x)$ for $x \in \mathbb{R}$,  $g(x)\leq 0.5$ if $x\geq 0$ and $\lim\sup_{x\to\infty} g(x)<0.5$. 
\end{itemize}

\begin{remark}
A popular choice for the function $g$ is as follows: for $q < 1/2$,
\begin{equation}\label{def:gfunction}
        g_q(x) = q \mathbb{1}(x > 0) + 0.5\times \mathbb{1}(x = 0) + (1-q) \mathbb{1}(x < 0),\quad \text{ for } x \in \mathbb{R}.
    \end{equation}
For further discussions, see \citet[Remark 2.1]{hu}. We note that the requirement $g(x)=1-g(-x)$ is needed for the analysis. For the other conditions on $g(\cdot)$,  they are intuitive in that if assigning a subject to treatment $1$ induces a more (resp. less) severe potential imbalance, then the probability for treatment $1$ should be lower (resp. higher) than treatment $0$.
\end{remark}

As discussed in the Introduction section, the key quantity to develop valid tests under the minimization method is the limiting covariance matrix of the (scaled) within-stratum imbalances. Specifically, for each stratum $\bm{z} \in \mathcal{Z}$, denote by $\bm{S}_n(\bm{z})$ its imbalance under the minimization method, i.e., 
\begin{equation}\label{stratum imbalance}
          \bm{S}_n(\bm{z})= \sum_{i=1}^{n}\mathbb{1}(\bm{Z}_i=\bm{z})\{I_i-(1-I_i)\}.
\end{equation}
Further, denote by $\bm{\Sigma}_n(\bm{p}_0)$ the covariance matrix of $n^{-1/2}\bm{S}_n$, where $\bm{S}_n = (\bm{S}_n(\bm{z}), \bm{z} \in \mathcal{Z})$. In the next subsection, we show that $\bm{\Sigma}_n(\bm{p}_0)$ converges as $n \to \infty$, and propose a consistent estimator for the limit.

\textbf{Notations.} Note that once the weight vector $\bm{\omega}$ and the function $g$ are given,  the joint distribution of $\{\bm{Z}_i,I_i:i\in [n]\}$ only depends on the probability mass function (pmf) of the generic vector of stratification factors $\bm{Z}$. To emphasize this dependence, when the pmf of $\bm{Z}$ is $\bm{p}$, we denote by $\Rightarrow_{\bm{p}}$ the  convergence in distribution, and by $\bm{\Sigma}_n(\bm{p})$ the covariance matrix of $n^{-1/2} \bm{S}_n$. 

\begin{remark}
In \citet[Section 2]{hu}, a more general version of the minimization method is considered. 
Specifically, for $i \in [n]$, let 
$S_{i-1}(\bm{z}) = \sum_{j \leq i-1} \mathbb{1}(\bm{Z}_j=\bm{z}) (2I_j-1)$  be the difference between the number of subjects within stratum $\bm{z}$ in each of the two treatment groups after the assignment of the $(i-1)$-th subject, and $D_{i-1} = \sum_{\bm{z} \in \mathcal{Z}} S_{i-1}(\bm{z})$ the overall difference.
Then define for $k \in \{0,1\}$,
$$\text{Within-Imb}_{i}^{(k)} = \omega_{s}\left(\bm{S}_{i-1}(\bm{Z}_i)-(-1)^{k}\right)^2,\quad \text{and} \quad \text{Overall-Imb}_i^{(k)} = \omega_o \left(D_{i-1}-(-1)^{k}\right)^2,$$
where $\omega_s$ and $\omega_o$ are two non-negative weights. 
In \cite{hu}, they define the following potential imbalance measures: 
$\text{Overall-Imb}_i^{(k)}+\text{Imb}_i^{(k)}+ \text{Within-Imb}^{(k)}_{i}$, for $k \in \{0,1\}$,
where $\text{Imb}_i^{(k)}$ is defined in Equation \eqref{Imbalance measure}. In contrast,  we do not include the within-stratum imbalance (i.e., $\omega_s = 0$) or the overall imbalance (i.e., $\omega_o = 0$) in the calculation of the potential imbalance measures in Equation \eqref{Imbalance measure}. 

Indeed, if the within-stratum imbalance is included (i.e., $\omega_s > 0$), it is known that the limiting covariance matrix has all zero entries (Hu and Zhang, 2020, Theorem 3.1(i)). Further, our results extend readily to the case where $w_o > 0$, at the expense of additional notations. 
\end{remark}

\subsection{Consistent covariance matrix estimator}\label{subsec:covestimate}

The next theorem establishes that the within-stratum imbalance vector, $n^{-1/2}\bm{S}_n$, 
converges in distribution, and its covariance matrix admits a large-$n$ limit. Let $\bm{\Sigma}_\infty(\bm{p}_0)$ be the $m \times m$ matrix defined in Equation \eqref{def:app_sigma_infty} in Appendix \ref{app:proof_joint_local}. Its definition relies on solutions to Poisson equations \citep[Equation (6.11)]{hu}, and is presented in Appendix due to its complicated form. Finally, denote by
$\mathcal{N}_m\left(\bm{v},\bm{V}\right)$  the $m$-dimensional normal distribution with  mean vector $\bm{v}$ and covariance matrix $\bm{V}$.

\begin{theorem} \label{theorem:preliminary}
As $n \to \infty$, we have
     \begin{align*} 
     n^{-1/2} \bm{S}_n \;\Rightarrow_{\bm{p}_0}\; \mathcal{N}_m\left(\bm{0}_m,\bm{\Sigma}_\infty(\bm{p}_0)\right),\quad \text{ and } \quad  \bm{\Sigma}_n(\bm{p}_0)\to \bm{\Sigma}_\infty(\bm{p}_0),
     \end{align*}
     where $\bm{0}_m$ is the $m$-dimensional zero vector.
 \end{theorem}

 \begin{proof} 
See Subsection \ref{subsec:proof_strategy}. The proof technique is essentially due to \citet{hu}.
 \end{proof}

\begin{remark}
   In \citet[Theorem 3.2(iv)]{hu}, they prove the marginal convergence within each stratum, that is, for each $\bm{z}\in \mathcal{Z}$, $n^{-1/2}\bm{S}_n(\bm{z})$ converges in distribution to a zero mean normal distribution. Specifically, they show that $\bm{S}_n(\bm{z})$ is approximated by a martingale (see the decomposition in \citet[equation (6.15)]{hu}) and then apply central limit theorems for martingales.

   In the above theorem, we establish the joint convergence using the Cram\'er-Wold device. Specifically,  we show, via the same martingale argument as in \citet[Theorem 3.2]{hu}, that for any $m$-dimensional vector $\bm{a} = (\bm{a}_{\bm{z}}:\bm{z}\in \mathcal{Z})$, $n^{-1/2}\sum_{\bm{z} \in \mathcal{Z}} \bm{a}_{\bm{z}} \bm{S}_n(\bm{z})$ converges in distribution to a zero-mean normal distribution with variance $\bm{a}^T \bm{\Sigma}_{\infty}(\bm{p}_0)\bm{a}$. For details, see the proof of Lemma \ref{lemma:joint} in Appendix \ref{app:proof_joint_local}.
\end{remark}

If $\bm{p}_0$ is known, we can obtain the sampling distribution, and thus the covariance matrix, of $n^{-1/2} \bm{S}_n$   by Monte Carlo methods; see discussions below. Further, the second statement in the above theorem shows that this would produce a consistent estimator for $\bm{\Sigma}_{\infty}(\bm{p}_0)$. However, $\bm{p}_0$ is rarely known in practice  since it is a population-level quantity, with the population being the targeted patients of the new treatment, which may change over time and  be different across locations. 
Thus, we propose to use an estimator $\hat{\bm{p}}_n$ instead, as justified by the following theorem. We denote by  $\bm{\Delta}_m$ the $m$-dimensional probability simplex, $\{\bm{p}\in \mathbb{R}^m: \sum_{\bm{z} \in \mathcal{Z}}\bm{p}(\bm{z})=1,\; \bm{p}(\bm{z}) \geq 0 \text{ for each } \bm{z} \in \mathcal{Z}\}$.

\begin{theorem}\label{theorem:main}
Let $\hat{\bm{p}}_n \in \bm{\Delta}_m$ be a sequence of estimators such that $n^{1/2}(\hat{\bm{p}}_n - \bm{p}_0)$ converges in distribution as $n \to \infty$. Then as $n \to \infty$,
    $$
        \bm{\Sigma}_n(\hat{\bm{p}}_n)\to \bm{\Sigma}_\infty(\bm{p}_0) \quad\text{in probability},$$
    where $\bm{\Sigma}_\infty(\bm{p}_0)$ appears in Theorem  \ref{theorem:preliminary} and is defined in  Equation \eqref{def:app_sigma_infty} in Appendix \ref{app:proof_joint_local}.
\end{theorem}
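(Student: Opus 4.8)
The plan is to compare $\bm{\Sigma}_n(\hat{\bm{p}}_n)$ against $\bm{\Sigma}_n(\bm{p}_0)$ and then fall back on Proposition~\ref{theorem:preliminary}. Writing
\begin{equation*}
\bm{\Sigma}_n(\hat{\bm{p}}_n) - \bm{\Sigma}_\infty(\bm{p}_0) = \big[\bm{\Sigma}_n(\hat{\bm{p}}_n) - \bm{\Sigma}_n(\bm{p}_0)\big] + \big[\bm{\Sigma}_n(\bm{p}_0) - \bm{\Sigma}_\infty(\bm{p}_0)\big],
\end{equation*}
the second bracket is deterministic and vanishes by Proposition~\ref{theorem:preliminary}, so it suffices to show the first bracket tends to $\bm{0}$ in probability. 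Since $\hat{\bm{p}}_n, \bm{p}_0 \in \bm{\Delta}_m$, the random direction $\bm{H}_n := n^{1/2}(\hat{\bm{p}}_n - \bm{p}_0)$ always lies in the hyperplane $\{\bm{h} : \sum_{\bm{z}} \bm{h}(\bm{z}) = 0\}$, and by hypothesis $\bm{H}_n$ converges in distribution and is therefore tight.

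The core of the argument is a local analysis of the deterministic map $\bm{p} \mapsto \bm{\Sigma}_n(\bm{p})$ along shrinking neighborhoods of $\bm{p}_0$. Set $g_n(\bm{h}) := \bm{\Sigma}_n(\bm{p}_0 + n^{-1/2}\bm{h})$, which is well defined for $\|\bm{h}\| \le M$, $\sum_{\bm{z}}\bm{h}(\bm{z}) = 0$ and $n$ large, because $\bm{p}_0$ is interior. I would first prove
\begin{equation*}
g_n(\bm{h}_n) \to \bm{\Sigma}_\infty(\bm{p}_0) \quad \text{whenever } \bm{h}_n \to \bm{h},\ \textstyle\sum_{\bm{z}}\bm{h}(\bm{z}) = 0,\ \{\bm{h}_n\} \text{ bounded}. \tag{$\star$}
\end{equation*}
Granting $(\star)$, a routine subsequence argument upgrades it to uniform convergence on the compact slices $\{\|\bm{h}\| \le M,\ \sum_{\bm{z}}\bm{h}(\bm{z}) = 0\}$: were it to fail, one could extract failure points $\bm{h}_{n_k}$ in such a slice, pass to a convergent subsequence $\bm{h}_{n_k} \to \bm{h}^\ast$, and splice them into a full sequence converging to $\bm{h}^\ast$, contradicting $(\star)$. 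Uniform convergence on compacts together with the tightness of $\bm{H}_n$ then gives $g_n(\bm{H}_n) = \bm{\Sigma}_n(\hat{\bm{p}}_n) \to \bm{\Sigma}_\infty(\bm{p}_0)$ in probability, since $\mathbb{P}(\|g_n(\bm{H}_n) - \bm{\Sigma}_\infty(\bm{p}_0)\| > \epsilon)$ is bounded by $\mathbb{P}(\|\bm{H}_n\| > M)$ plus a deterministic term that vanishes.

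To establish $(\star)$ I would invoke Le Cam's third lemma for the local alternatives $\bm{p}_n := \bm{p}_0 + n^{-1/2}\bm{h}_n$. Because the design function $g$ and the weights $\bm{\omega}$ are fixed, the conditional law of each $I_i$ given the past coincides under $\bm{p}_0$ and $\bm{p}_n$, so the log-likelihood ratio between the two product laws of $\{(\bm{Z}_i, I_i)\}$ collapses to $\Lambda_n = \sum_{i=1}^n \log\{\bm{p}_n(\bm{Z}_i)/\bm{p}_0(\bm{Z}_i)\}$, depending on the covariates only. A Taylor expansion yields the LAN representation $\Lambda_n = W_n - \tfrac{1}{2}\sigma_h^2 + o_{\bm{p}_0}(1)$, with $W_n = n^{-1/2}\sum_i \bm{h}(\bm{Z}_i)/\bm{p}_0(\bm{Z}_i)$ and $\sigma_h^2 = \sum_{\bm{z}} \bm{h}(\bm{z})^2/\bm{p}_0(\bm{z})$, so $\{\bm{p}_n\}$ is contiguous to $\{\bm{p}_0\}$ by Le Cam's first lemma. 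I would then obtain the joint limit $(n^{-1/2}\bm{S}_n, \Lambda_n) \Rightarrow_{\bm{p}_0} \mathcal{N}_{m+1}$ with cross-covariance $\bm{\tau} := \lim_n \mathrm{Cov}_{\bm{p}_0}(n^{-1/2}\bm{S}_n, W_n)$, by applying the Cram\'er--Wold device and the Lyapunov/martingale CLT of \cite{hu} to the combinations $\bm{a}^{\top}(n^{-1/2}\bm{S}_n) + b\,W_n$. Le Cam's third lemma then yields $n^{-1/2}\bm{S}_n \Rightarrow_{\bm{p}_n} \mathcal{N}_m(\bm{\tau}, \bm{\Sigma}_\infty(\bm{p}_0))$; crucially the limiting covariance is left unchanged by the perturbation.

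Weak convergence under $\bm{p}_n$ is not by itself enough, since $\bm{\Sigma}_n(\bm{p}_n)$ is the exact second-moment covariance of $n^{-1/2}\bm{S}_n$. The remaining step, which I expect to be the main obstacle, is to upgrade the convergence in distribution to convergence of the first two moments, namely $\mathbb{E}_{\bm{p}_n}[n^{-1/2}\bm{S}_n] \to \bm{\tau}$ and $\mathbb{E}_{\bm{p}_n}[(n^{-1/2}\bm{S}_n)(n^{-1/2}\bm{S}_n)^{\top}] \to \bm{\tau}\bm{\tau}^{\top} + \bm{\Sigma}_\infty(\bm{p}_0)$, which together give $\bm{\Sigma}_n(\bm{p}_n) \to \bm{\Sigma}_\infty(\bm{p}_0)$. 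For this I would prove a uniform bound $\sup_n \sup_{\bm{p} : \min_{\bm{z}} \bm{p}(\bm{z}) \ge c} \mathbb{E}_{\bm{p}}[\|n^{-1/2}\bm{S}_n\|^{2+\delta}] < \infty$ for some $\delta > 0$ and every $c > 0$, by carrying the Lyapunov and Poisson-equation estimates of \cite{hu} through with constants depending on $\bm{p}$ only through the lower bound $c$ on its entries; since $\bm{p}_0$ is interior, $\min_{\bm{z}}\bm{p}_n(\bm{z}) \ge c$ for large $n$, supplying the uniform integrability needed. The delicate point is precisely this uniformity of the moment bounds over a neighborhood of $\bm{p}_0$, rather than at the single pmf $\bm{p}_0$ handled in \cite{hu}; the Le Cam machinery and the compactness argument are comparatively routine once it is available.
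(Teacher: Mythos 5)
Your proposal tracks the paper's own proof in all structural respects: you reduce the statement to deterministic local alternatives (your tightness plus uniform-convergence-on-compact-slices argument is a hands-on proof of the extended continuous mapping theorem that the paper cites from Kosorok), you establish LAN for the covariate-only log-likelihood ratio, you obtain joint asymptotic normality of $(n^{-1/2}\bm{S}_n,\Lambda_n)$ by Cram\'er--Wold and the martingale CLT machinery of \cite{hu}, you apply Le Cam's third lemma, and you upgrade weak convergence to convergence of covariance matrices through uniform integrability. One genuine difference works in your favor: you never identify the cross-covariance $\bm{\tau}$, observing correctly that the mean shift it induces under the alternatives cancels from the (translation-invariant) covariance. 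The paper instead proves $\bm{\tau}=\bm{0}$ (Lemma \ref{lemma:joint}) via an odd-function property of Poisson-equation solutions for symmetric Markov chains (Lemma \ref{lemma:odd function}, Corollary \ref{cor:oddfunction}); on your route that symmetry analysis can be dropped, though you still need the ergodic theorem to show the predictable cross-variation converges in probability to \emph{some} constant, which is most of the content of Lemma \ref{lemma:preface2} anyway.

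The place where your plan deviates materially --- and where it would cost far more work than necessary --- is the uniform integrability step, which you yourself flag as the main obstacle. You propose to re-derive the Lyapunov and Poisson-equation moment estimates of \cite{hu} with constants uniform over $\{\bm{p}:\min_{\bm{z}}\bm{p}(\bm{z})\ge c\}$. That is a substantial, unproven program, and it is not needed. Since the log-likelihood ratio $\ell_n(\bm{p}_n;\bm{p}_0)$ depends only on $\bm{Z}_1,\ldots,\bm{Z}_n$ --- a fact you already observed when arguing that the likelihood ratio ``collapses'' --- it is also the likelihood ratio for the \emph{joint} sequence $\{(\bm{Z}_i,I_i)\}$ (Lemma \ref{lemma:changeofmeasure}), and Cauchy--Schwarz then gives
\begin{equation*}
\mathbb{E}_{\bm{p}_n}\left\{|n^{-1/2}\bm{S}_n(\bm{z})|^r\right\}
\le \left(\mathbb{E}_{\bm{p}_0}\left[\exp\{2\ell_n(\bm{p}_n;\bm{p}_0)\}\right]\right)^{1/2}
\left(\mathbb{E}_{\bm{p}_0}\left[|n^{-1/2}\bm{S}_n(\bm{z})|^{2r}\right]\right)^{1/2},
\end{equation*}
where the first factor is bounded by an elementary i.i.d.\ product computation (at most $\exp\{\sum_{\bm{z}}M^2/\bm{p}_0(\bm{z})\}$ when $\sup_n\sup_{\bm{z}}|n^{1/2}(\bm{p}_n-\bm{p}_0)(\bm{z})|\le M$) and the second is Hu's moment bound at the single pmf $\bm{p}_0$; this is exactly the paper's Lemma \ref{app:lemma_bnded_unif_moments}. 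So moment bounds at $\bm{p}_0$ alone suffice along local sequences, and the ``delicate uniformity'' you anticipate is a detour rather than a requirement. With that replacement, your argument is complete and coincides in substance with the paper's.
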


\begin{proof}
    See Subsection \ref{subsec:proof_strategy}.
\end{proof}

The requirement that $n^{1/2}(\hat{\bm{p}}_n - \bm{p}_0)$ converges in distribution is not stringent. In particular, it holds if   $\hat{\bm{p}}_n$ is the empirical probability mass function (pmf) for the vectors of stratification factors used in the minimization method, that is, $ \hat{\bm{p}}_n(\bm{z}) =  n^{-1}\sum_{i=1}^{n}\mathbb{1}(\bm{Z}_i=\bm{z})$ for each $\bm{z}\in \mathcal{Z}$, which is usually available.

When the number of subjects $n$ does not far exceed the strata size $m$, the empirical pmf may not be accurate for estimating $\bm{p}_0$. In this case, one remedy, if at all possible, is to recruit additional $r$ subjects from the same population, from which $\bm{Z}_1,\ldots, \bm{Z}_n$ are drawn, and record only their stratification factors. Then we may use the empirical pmf $\hat{p}_N$ of the $N = n + r$ covariate vectors, including $\bm{Z}_1,\ldots, \bm{Z}_n$,  as an estimate for $\bm{p}_0$.
Further, in some applications, there may be reasonable structural assumptions on the vectors of stratification factors. For example, if the stratification factors are assumed to be independent, it suffices to estimate marginal pmfs, that is, 
$\hat{\bm{p}}_n(\bm{z}) = \prod_{k \in [K]} \{n^{-1} \sum_{i=1}^{n} \mathbb{1}(\bm{Z}_i^{k} = \bm{z}^k)\}$ for each $\bm{z} \in \mathcal{Z}$.

\textbf{Monte Carlo Estimation.} 
Given an estimator $\hat{\bm{p}}_n$ for $\bm{p}_0$, by Monte Carlo methods, one may get an estimator for the covariance matrix, $\bm{\Sigma}_n(\hat{\bm{p}}_n)$, of $n^{-1/2}\bm{S}_{n}$, when the vector of stratification factors has a pmf given by $\hat{\bm{p}}_n$.

Specifically, let $B$ be a large number, indicating the total number of repetitions. For the $b$-th repetition with $b \in [B]$, we generate a sequence of vectors of stratification factors $\{\bm{Z}_i^{(b)}: i \in [n]\}$ from the pmf $\hat{\bm{p}}_n$, follow the minimization method to make assignments, and compute the within-stratum imbalances  $\bm{S}_n^{(b)}$. Finally, we report the empirical covariance matrix of $\{n^{-1/2} \bm{S}_n^{b}: b \in [B]\}$ as our estimator, that is,
\begin{align} 
\label{def:boot_est}
    \hat{\bm{\Sigma}}_n^{B}(\hat{\bm{p}}_n) =
    (B-1)^{-1} \sum_{b=1}^{B} \left\{n^{-1/2}\bm{S}_n^{(b)}- \hat{\bm{\mu}}^{B}(\hat{\bm{p}}_n)\right\} \left\{n^{-1/2}\bm{S}_n^{(b)}  - \hat{\bm{\mu}}^{B}(\hat{\bm{p}}_n)\right\}^\text{T},
\end{align}
where $\hat{\bm{\mu}}^{B}(\hat{\bm{p}}_n) = B^{-1} \sum_{b=1}^{B} (n^{-1/2}\bm{S}_n^{(b)})$ is the empirical mean vector. 

The Monte Carlo approach could be viewed as a parametric bootstrap, and since $\bm{Z}$ is discrete, if $\hat{\bm{p}}_n$ is the empirical pmf, it is also a non-parametric bootstrap  \citep{shao2012jackknife}.
The next theorem establishes the consistency of the Monte Carlo estimator.

\begin{theorem}\label{thm:bootstrap_consistency}
Consider the setup in Theorem \ref{theorem:main}. As $\min\{n,B\} \to \infty$,
\begin{align*}
   \hat{\bm{\Sigma}}_n^{B}(\hat{\bm{p}}_n) \;\; \to \;\; \bm{\Sigma}_\infty(\bm{p}_0) \quad\text{in probability}.
\end{align*}
\end{theorem}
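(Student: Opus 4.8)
The plan is to reduce the double limit to two pieces that are controlled separately, via the decomposition
\[
\hat{\bm{\Sigma}}_n^{B}(\hat{\bm{p}}_n) - \bm{\Sigma}_\infty(\bm{p}_0) = \bigl\{\hat{\bm{\Sigma}}_n^{B}(\hat{\bm{p}}_n) - \bm{\Sigma}_n(\hat{\bm{p}}_n)\bigr\} + \bigl\{\bm{\Sigma}_n(\hat{\bm{p}}_n) - \bm{\Sigma}_\infty(\bm{p}_0)\bigr\}.
\]
The second bracket is the plug-in error, which converges to the zero matrix in probability as $n \to \infty$ by Theorem \ref{theorem:main}, uniformly in $B$. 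It therefore remains to show that the first bracket, the Monte Carlo sampling error, tends to $\bm{0}_m$ in probability as $\min\{n,B\} \to \infty$, after which the triangle inequality finishes the argument.

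For the Monte Carlo error I would condition on the data $\mathcal{F}_n = \sigma(\bm{Z}_1,\ldots,\bm{Z}_n)$, which determines $\hat{\bm{p}}_n$. Given $\mathcal{F}_n$, the rescaled imbalances $\{n^{-1/2}\bm{S}_n^{(b)}: b \in [B]\}$ are i.i.d.\ draws from the law of $n^{-1/2}\bm{S}_n$ under the pmf $\hat{\bm{p}}_n$, so the Bessel-corrected sample covariance in Eq.\eqref{def:boot_est} is conditionally unbiased,
\[
\mathbb{E}\bigl[\hat{\bm{\Sigma}}_n^{B}(\hat{\bm{p}}_n) \mid \mathcal{F}_n\bigr] = \bm{\Sigma}_n(\hat{\bm{p}}_n),
\]
and each of its entries has conditional variance of order $B^{-1}$ times a fourth-moment functional of $n^{-1/2}\bm{S}_n$ under $\hat{\bm{p}}_n$. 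Thus, by a conditional Chebyshev inequality applied entrywise, the sampling error vanishes in conditional probability as $B\to\infty$, provided these conditional fourth moments stay bounded as $n$ grows.

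The crux, and the step I expect to be the main obstacle, is to bound $\mathbb{E}_{\bm{p}}[(n^{-1/2}\bm{S}_n(\bm{z}))^4]$ together with the mixed fourth moments \emph{uniformly in $n$} and \emph{uniformly over $\bm{p}$ in a neighborhood of $\bm{p}_0$}, since the conditional distribution of the Monte Carlo draws changes both with the sample size and with the random estimate $\hat{\bm{p}}_n$. I would obtain this from the Lyapunov/drift analysis of \cite{hu}: the marginal imbalance process admits a Lyapunov function with uniform negative drift whenever the stratum probabilities are bounded away from zero, and because $\bm{p}_0(\bm{z}) > 0$ for every $\bm{z} \in \mathcal{Z}$ there is a neighborhood $\mathcal{U}$ of $\bm{p}_0$ on which the drift constants can be chosen uniformly. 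This yields uniform-in-$n$ bounds on the $(4+\delta)$-th moments of the marginal imbalances, hence of $n^{-1/2}\bm{S}_n(\bm{z})$, for all $\bm{p} \in \mathcal{U}$; in particular the fourth powers are uniformly integrable, so the needed fourth moments are bounded by a constant $C$ free of $n$ over $\mathcal{U}$.

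Finally I would assemble the pieces by conditioning. Since the assumed convergence of $n^{1/2}(\hat{\bm{p}}_n - \bm{p}_0)$ implies $\hat{\bm{p}}_n \to \bm{p}_0$ in probability, the event $G_n = \{\hat{\bm{p}}_n \in \mathcal{U}\}$ satisfies $\mathbb{P}(G_n) \to 1$. Writing $D_n^{B} = \hat{\bm{\Sigma}}_n^{B}(\hat{\bm{p}}_n) - \bm{\Sigma}_n(\hat{\bm{p}}_n)$ and using any matrix norm (all equivalent for fixed $m$), for every $\epsilon > 0$,
\[
\mathbb{P}\bigl(\|D_n^{B}\| > \epsilon\bigr) \;\le\; \mathbb{E}\bigl[\mathbbm{1}(G_n)\,\mathbb{P}(\|D_n^{B}\| > \epsilon \mid \mathcal{F}_n)\bigr] + \mathbb{P}(G_n^{c}) \;\le\; \frac{C}{B\epsilon^{2}} + \mathbb{P}(G_n^{c}),
\]
where on $G_n$ the conditional Chebyshev bound invokes the uniform fourth-moment constant $C$ of the previous paragraph. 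This conditioning device conveniently avoids controlling the magnitude of $D_n^{B}$ off $G_n$, only its vanishing probability. Letting $\min\{n,B\} \to \infty$ drives both terms to zero, so $D_n^{B} \to \bm{0}_m$ in probability; combined with the plug-in term from Theorem \ref{theorem:main} via the triangle inequality, this proves the claim.
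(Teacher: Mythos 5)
Your overall architecture matches the paper's: the same decomposition into a Monte Carlo error and a plug-in error, the plug-in error killed by Theorem \ref{theorem:main}, and the Monte Carlo error handled by conditioning on $\mathcal{F}_n$, exploiting the conditional i.i.d.\ structure of the bootstrap draws, and applying a conditional Chebyshev bound whose cost is $B^{-1}$ times conditional fourth moments of $n^{-1/2}\bm{S}_n$ under $\hat{\bm{p}}_n$. The gap is exactly at the step you yourself flag as the main obstacle: you assert, but do not prove, that $\sup_{n}\sup_{\bm{p}\in\mathcal{U}}\mathbb{E}_{\bm{p}}[\{n^{-1/2}\bm{S}_n(\bm{z})\}^4]<\infty$ for some fixed neighborhood $\mathcal{U}$ of $\bm{p}_0$, by appealing to a ``uniform negative drift'' version of the Lyapunov analysis in \cite{hu}. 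That reference only gives $\sup_n \mathbb{E}_{\bm{p}}|n^{-1/2}\bm{S}_n(\bm{z})|^r<\infty$ for a \emph{fixed} pmf (Lemma \ref{lemma:relevant properties1}(ii) here); the constant depends on $\bm{p}$ through the drift of the $\bm{\Lambda}$-chain, through the Poisson-equation solutions $\hat{g}_{\bm{z}}$ (whose defining kernel changes with $\bm{p}$), and through the explicit factors $\bm{p}(\bm{z})$ in the martingale decomposition. Upgrading this to a bound uniform over a neighborhood means re-deriving the entire drift/Poisson-equation machinery of \cite{hu} while tracking all constants in $\bm{p}$ — plausible, but it is the whole technical content of the theorem, and it is neither available as a citation nor supplied in your argument.

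The paper avoids this uniformity question entirely, and it is worth seeing how, because the resolution uses precisely the hypothesis you are given and nothing more. It only needs the conditional fourth moment $\mathbb{E}_{\hat{\bm{p}}_n}[\{n^{-1/2}\bm{S}_n(\bm{z})\}^4]$ to be \emph{bounded in probability} (Lemma \ref{app:lemma_bound_in_prob}), not uniformly bounded over a set. This follows from two ingredients: (i) a change-of-measure identity (Lemma \ref{lemma:changeofmeasure}) showing that $\exp\{\ell_n(\bm{p};\bm{p}_0)\}$ is the likelihood ratio for the joint sequence $\{(\bm{Z}_i,I_i)\}$, so that by Cauchy--Schwarz, for deterministic $\sqrt{n}$-local alternatives $\bm{p}_n$, $\mathbb{E}_{\bm{p}_n}[|n^{-1/2}\bm{S}_n(\bm{z})|^r] \le (\mathbb{E}_{\bm{p}_0}[\exp\{2\ell_n(\bm{p}_n;\bm{p}_0)\}])^{1/2}(\mathbb{E}_{\bm{p}_0}[|n^{-1/2}\bm{S}_n(\bm{z})|^{2r}])^{1/2}$, where the first factor stays bounded precisely because $n^{1/2}(\bm{p}_n-\bm{p}_0)$ converges (it blows up for fixed $\bm{p}\ne\bm{p}_0$), and the second factor is the existing bound of \cite{hu} under $\bm{p}_0$ itself (Lemma \ref{app:lemma_bnded_unif_moments}); and (ii) the Le Cam / continuous-mapping argument already used for Theorem \ref{theorem:main}, which converts this deterministic-local-alternative statement into convergence in probability of $\mathbb{E}_{\hat{\bm{p}}_n}[\{n^{-1/2}\bm{S}_n(\bm{z})\}^4]$ to a Gaussian fourth moment, hence boundedness in probability. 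So the hypothesis $n^{1/2}(\hat{\bm{p}}_n-\bm{p}_0)=O_p(1)$ is not just used to make $\mathbb{P}(\hat{\bm{p}}_n\in\mathcal{U})\to 1$, as in your argument; it is what makes the likelihood-ratio moments controllable, which is the substitute for the uniform drift analysis you would otherwise have to carry out. To repair your proof, either establish the uniform-in-$\mathcal{U}$ moment bound rigorously, or replace that step with the change-of-measure route.
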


\begin{proof}
Note the following decomposition
$$
\hat{\bm{\Sigma}}_n^{B}(\hat{\bm{p}}_n) - \bm{\Sigma}_{\infty}(\bm{p}_0) = 
\{\hat{\bm{\Sigma}}_n^{B}(\hat{\bm{p}}_n) - {\bm{\Sigma}}_n(\hat{\bm{p}}_n)\}
+\{{\bm{\Sigma}}_n(\hat{\bm{p}}_n) - \bm{\Sigma}_{\infty}(\bm{p}_0)\}.
$$
By Theorem \ref{theorem:main}, the  term in the second bracket converges to zero in probability as $n \to \infty$. It remains to bound the (conditional) moments of the Monte Carlo error; see Appendix \ref{app:proof_bootstrap_consistency}.
\end{proof}

\begin{remark}
In the bootstrap literature, it is customary not to consider the Monte Carlo error, which usually can be  made arbitrarily small by increasing $B$. The above Theorem requires a proof because the bootstrap statistics $\{n^{-1/2}\bm{S}_n^{(b)}: b \in [B]\}$ are not bounded.    
\end{remark}

\begin{remark}
As mentioned before, a consistent estimator for $\bm{\Sigma}_{\infty}(\bm{p}_0)$ is required to develop asymptotically valid statistical tests for various models; see, e.g., \cite{li,ye,johnson2022validity}. 
In simulation studies in Section \ref{sec:survival}, we consider  adjusted tests using the proposed estimator, $\hat{\bm{\Sigma}}_n^{B}(\hat{\bm{p}}_n)$, for survival analysis.
\end{remark}

\subsection{Proof strategy}\label{subsec:proof_strategy}
Since the proof is lengthy and involved, 
in this subsection, we discuss the proof strategy and highlight the main novelty of the proof, while details can be found in Appendix.  
Recall that $\bm{\Sigma}_n(\bm{p})$ is the covariance matrix of $n^{-1/2} \bm{S}_n$ when $\bm{Z}$ has the probability mass function (pmf) $\bm{p} \in \bm{\Delta}_m$. Note that for a fixed $n$, $\bm{\Sigma}_n(\bm{p})$ may be viewed as a complicated function of $\bm{p}$.
For each pmf $\bm{p} \in \bm{\Delta}_m$, define the log-likelihood ratio of $\bm{p}$ against $\bm{p}_0$ for the sequence $\bm{Z}_1,\ldots,\bm{Z}_n$ as follows:
\begin{equation}
        \ell_n(\bm{p};\bm{p}_0) = \sum_{i=1}^{n}\log\left\{
        \bm{p}(\bm{Z}_i)/\bm{p}_0(\bm{Z}_i)\right\}.
        \label{ell}
\end{equation}

The proof involves multiple steps, which are summarized in Table \ref{tab:step} and discussed in more detail below. For the purpose of discussion, fix an arbitrary sequence of deterministic pmfs $\bm{p}_n \in \bm{\Delta}_m$ such that $n^{1/2}(\bm{p}_n-\bm{p}_0)$ is convergent as $n \to \infty$. The sequence $\{\bm{p}_n\}$ is known as local alternatives in the literature \citep{van}.

First, we show in Lemma \ref{lemma:LAN} that the likelihood ratios $\ell_n(\bm{p}_n; \bm{p}_0)$ enjoys the local asymptotic normality (LAN) at $\bm{p}_0$ \citep[Theorem 7.2]{van}. 

Second, based on the LAN property, we apply the same martingale argument as in the proof of  \citet[Theorem 3.2(iv)]{hu} and establish that the statistics $n^{-1/2} \bm{S}_n$ and the log-likelihood ratios  $\ell_n(\bm{p}_n;\bm{p}_0)$ converge in distribution \textit{jointly}.

Third, the joint convergence, together with the LAN property, allows us to derive the limiting distribution of the statistics $n^{-1/2} \bm{S}_n$ under the local alternatives  $\{\bm{p}_n\}$. Specifically, in Lemma \ref{lemma:perturbed normality}, by  Le Cam's third lemma \cite[Theorem 6.6]{van}, we prove that the distribution of $n^{-1/2} \bm{S}_n$, when $\bm{Z}$ has the pmf $\bm{p}_n$,
 converges in distribution to $\mathcal{N}_m\left(\bm{0},\bm{\Sigma}_\infty(\bm{p}_0)\right)$. Further, by \cite[Theorem 2.20]{van}, the weak convergence implies that the first two moments of  $n^{-1/2} \bm{S}_n$, when $\bm{Z}$ has the pmf $\bm{p}_n$, converges to those of the limiting distribution. 
 Since the covariance matrix of  $n^{-1/2} \bm{S}_n$ under the local alternative $\bm{p}_n$ is $\bm{\Sigma}_n(\bm{p}_n)$, we have that 
 $\bm{\Sigma}_n(\bm{p}_n)$ converges to $\bm{\Sigma}_\infty(\bm{p}_0)$. This step is the main novelty of our proof in that \textit{we establish  the property of  
  $\bm{\Sigma}_n(\bm{p}_n)$ by working with the distribution of $n^{-1/2} \bm{S}_n$ under the local alternative $\bm{p}_n$}.

Finally, given the established properties of the functions $\{\bm{\Sigma}_n(\cdot)\}$, we apply the extended continuous mapping theorem \cite[Theorem 7.24]{kosorok}  to show the consistency of the proposed procedure as stated in Theorem \ref{theorem:main}.

Next, we provide precise statements for the first three steps and prove Theorem \ref{theorem:preliminary} and  \ref{theorem:main} using  the result from the third step. We start with the asymptotic expansion for a sequence of log-likelihood ratios under local alternatives at $\bm{p}_0$.

\begin{table}[t!]
    \caption{For each step, the first column indicates the key techniques, while the second lists the precise result.}
    \centering
    \begin{tabular}{l|l}
    Step $1$: local asymptotic normality & Lemma \ref{lemma:LAN}:   asymptotic expansion of $\ell_n(\bm{p}_n;\bm{p}_0)$
    \\
      Step $2$: Cramer-Wold device, martingale CLT & Lemma \ref{lemma:joint}: joint convergence of $(n^{-1/2}\bm{S}_n,\ell_n(\bm{p}_n;\bm{p}_0))^T$.\\ Step $3$: Le Cam's third lemma,  moments convergence & Lemma \ref{lemma:perturbed normality}: convergence of $n^{-1/2}\bm{S}_n$ when $\bm{Z}$ is from $\bm{p}_n$.
      \\
      Step $4$: extended continuous mapping & Theorem \ref{theorem:main}: consistent covariances estimation.
    \end{tabular}
    \label{tab:step}
\end{table}


\begin{mylem} \label{lemma:LAN}
  Let $\bm{p}_n \in \bm{\Delta}_m$ be a sequence of deterministic probability mass functions such that $n^{1/2}(\bm{p}_n - \bm{p_0})$ converges to some vector $\bm{\delta} \in \mathbb{R}^{m}$ as $n \to \infty$. Then we have
    $$
        \ell_n(\bm{p}_n;\bm{p}_0) =  n^{-1/2} \sum_{i=1}^{n} \frac{\bm{\delta}(\bm{Z}_i)}{\bm{p}_0(\bm{Z}_i)} -\frac{1}{2} \sum_{\bm{z} \in \mathcal{Z}}
        \frac{\bm{\delta}^2(\bm{z})}{\bm{p}_0(\bm{z})}
        +o_{\bm{p}_0}(1),
   $$
   where $o_{\bm{p}_0}(1)$ is a sequence of random variables that converges to $0$ in probability.
\end{mylem}

\begin{proof}
The result is essentially due to \citet[Lemma 7.6]{van}.    See Appendix \ref{app:lam} for details.
\end{proof}

Next, we show the joint convergence in distribution of the statistics
$n^{-1/2}\bm{S}_n$ and the local likelihood ratios.

\begin{mylem} \label{lemma:joint}
For any sequence of deterministic probability mass functions $\bm{p}_n \in \bm{\Delta}_m$, if $n^{1/2}(\bm{p}_n - \bm{p}_0)$ converges to some vector $\bm{\delta} \in \mathbb{R}^{m}$ as $n \to \infty$, then we have
   $$
        \begin{Bmatrix} n^{-1/2}\bm{S}_n \\ \ell_n(\bm{p}_n;\bm{p}_0)\end{Bmatrix}
       \;\; \Rightarrow_{\bm{p}_0}\;\;     
        \mathcal{N}_{m+1}\left(\begin{pmatrix}\bm{0}_m\\
        -\tau^2/2
        \end{pmatrix}
        ,\begin{pmatrix}\bm{\Sigma}_\infty(\bm{p}_0) &\bm{0}_m \\ \bm{0}_m^\text{T} &
        \tau^2         \end{pmatrix}
        \right),
    $$
    where $\tau^2 = \sum_{\bm{z} \in \mathcal{Z}} \bm{\delta}^2(\bm{z})/\bm{p}_0(\bm{z})$ and
    $\bm{\Sigma}_\infty(\bm{p}_0)$ is defined in  Equation \eqref{def:app_sigma_infty} in Appendix \ref{app:proof_joint_local}.
\end{mylem}

\begin{proof}
    See Appendix \ref{app:proof_joint_local}. 
\end{proof}

\begin{remark}
Due to the Cram\'er-Wold device,  the above lemma is equivalent to the following: for any  vector 
${\bm{a}} = \{\bm{a}_{\bm{z}}: \bm{z} \in \mathcal{Z}\}$ and scalar $b_{\ell}$,
\begin{align*}
   \sum_{\bm{z} \in \mathcal{Z}} \bm{a}_{\bm{z}} ( n^{-1/2} \bm{S}_n(\bm{z})) + b_{\ell} \left(\ell_n(\bm{p}_n;\bm{p}_0) + \tau^2/2\right)\;\; \Rightarrow_{\bm{p}_0}\;\; 
    \mathcal{N}_1\left(0, {\bm{a}}^\text{T} \bm{\Sigma}_\infty(\bm{p}_0) {\bm{a}} + b_{\ell}^2 \tau^2 \right).
\end{align*}
As in the remark following Theorem \ref{theorem:preliminary}, it is shown in the proof of  \citet[Theorem 3.2(iv)]{hu} that for each $\bm{z} \in \mathcal{Z}$, $\bm{S}_n(\bm{z})$ is approximated by a martingale. Further, by Lemma \ref{lemma:LAN}, $\ell_n(\bm{p}_n;\bm{p}_0) + \tau^2/2$ is asymptotically equivalent to a sum of i.i.d.~zero mean random variables, which is also a  martingale. Then we apply  martingale central limit theorems to establish the above convergence, which relies on techniques developed in \cite{hu}, and a structural  property of Poisson equation solutions associated with symmetric Markov chains (see Lemma \ref{lemma:odd function} in Appendix).
\end{remark}


In Appendix \ref{app:lam}, we show in Lemma \ref{lemma:changeofmeasure} that $\ell_n(\bm{p};\bm{p}_0)$ is, in fact, also the log-likelihood ratio of $\bm{p}$ against $\bm{p}_0$ for the joint sequence $\{\bm{Z}_i, I_i: i \in [n]\}$. Then by Le Cam's third lemma  \cite[Theorem 6.6]{van}, using the LAN property in Lemma \ref{lemma:LAN}, we establish the limiting distribution of $n^{-1/2} \bm{S}_n$ under local alternatives $\{\bm{p}_n\}$.

\begin{mylem} \label{lemma:perturbed normality}
For any sequence of deterministic probability mass functions $\bm{p}_n \in \bm{\Delta}_m$, if $n^{1/2}(\bm{p}_n - \bm{p}_0)$ converges to some vector as $n \to \infty$, then we have
    $$
        n^{-1/2}\bm{S}_n \;\;\Rightarrow_{\bm{p}_n}\;\; \mathcal{N}_m\left(\bm{0},\bm{\Sigma}_\infty(\bm{p}_0)\right), \quad \text{ and } \quad
        \bm{\Sigma}_n(\bm{p}_n) \to \bm{\Sigma}_{\infty}(\bm{p}_0),
    $$
    where $\bm{\Sigma}_\infty(\bm{p}_0)$ is defined in  Equation \eqref{def:app_sigma_infty} in Appendix \ref{app:proof_joint_local}, and $\Rightarrow_{\bm{p}_n}$ means that the distribution of $n^{-1/2}\bm{S}_n$, when the pmf of $\bm{Z}$ is given by $\bm{p}_n$, converges weakly as $n \to \infty$.
\end{mylem}

\begin{proof}
 See Appendix \ref{app:proof_general_local}.
\end{proof}

\begin{remark}
The second part in  Lemma \ref{lemma:perturbed normality} states the following property regarding the sequence of functions $\{\bm{\Sigma}_n(\cdot)\}$ locally at $\bm{p}_0$: for any sequence of pmfs $\{\bm{p}_n\}$, as long as $n^{1/2}(\bm{p}_n - \bm{p}_0)$ is convergent, the function $\bm{\Sigma}_n(\cdot)$ evaluated at $\bm{p}_n$ converges to the same limit $\bm{\Sigma}_\infty(\bm{p}_0)$. 
\end{remark}

Given Lemma \ref{lemma:perturbed normality}, the proofs of Theorem  \ref{theorem:preliminary} and  \ref{theorem:main} are immediate.

\begin{proof}[Proof of Theorem \ref{theorem:preliminary}]
     It follows from Lemma \ref{lemma:perturbed normality} by letting $\bm{p}_n = \bm{p}_0$ for all $n \geq 1$.
\end{proof}   

Finally, we prove the main theorem.
\begin{proof}[Proof of Theorem \ref{theorem:main}]
For each $n \geq 1$, define $\bm{D}_n = \{n^{1/2}(\bm{p}-\bm{p}_0)\in \mathbb{R}^{m} : \bm{p} \in \bm{\Delta}_m\}$, and for each $\bm{\delta} \in \bm{D}_n$, define $f_n(\bm{\delta}) = \bm{\Sigma}_n(\bm{p}_0+n^{-1/2}\bm{\delta})$. Further, for each $\bm{\delta} \in \mathbb{R}^{m}$, define $f_{\infty}(\bm{\delta}) = \bm{\Sigma}_{\infty}(\bm{p}_0)$, that is, $f_{\infty}(\cdot)$ is a constant function.

Fix any sequence of $\delta_n \in  \bm{D}_n$ such that $\bm{\delta}_n$ converges to some $\bm{\delta} \in \mathbb{R}^{m}$. Define $\bm{p}_n = \bm{p}_0 + n^{-1/2} \delta_n \in \mathbb{R}^{m}$. Clearly, $n^{1/2}(\bm{p}_n - \bm{p}_0) = \delta_n$ converges to $\delta$ as $n \to \infty$. Then by Lemma \ref{lemma:perturbed normality} (and also see its subsequent remark),
we have 
$$f_n(\bm{\delta}_n) = \bm{\Sigma}_n(\bm{p}_n) \;\; \to \;\;
\bm{\Sigma}_{\infty}(\bm{p}_0) = 
f_{\infty}(\bm{\delta}), \text{ as } n \to \infty.
$$ 
Thus the sequence of functions $\{f_n(\cdot)\}$ and $f_{\infty}(\cdot)$ meet the condition required by the extended continuous mapping theorem \cite[Theorem 7.24]{kosorok}. 

Finally, since by assumption $n^{1/2}(\hat{\bm{p}}_n - \bm{p}_0)$ converges to some random vector $\bm{\Theta}$,  \cite[Theorem 7.24]{kosorok} implies that 
$f_n(n^{1/2}(\hat{\bm{p}}_n - \bm{p}_0))\; \Rightarrow_{\bm{p}_0}\; f(\bm{\Theta}) = \bm{\Sigma}_{\infty}(\bm{p}_0)$, where the equality is due to the definition of $f_{\infty}(\cdot)$. The proof is complete since by definition $f_n(n^{1/2}(\hat{\bm{p}}_n - \bm{p}_0)) = \bm{\Sigma}_n(\hat{\bm{p}}_n)$.
\end{proof}

\begin{remark}
    As is evident from the above proof, the key step is the ``continuity" of functions $\{\bm{\Sigma}_n(\cdot)\}$ at $\bm{p}_0$, established in Lemma \ref{lemma:perturbed normality}. We note again that $\bm{\Sigma}_n(\bm{p})$ is a complicated function of $\bm{p}$. The main insight is that 
$\bm{\Sigma}_n(\bm{p}_n)$ is the covariance matrix of  $n^{-1/2} \bm{S}_n$  when $\bm{Z}$ has the pmf $\bm{p}_n$, and that we establish this notion of continuity by studying the limiting distribution of $n^{-1/2} \bm{S}_n$ under the local alternative $\bm{p}_n$.
\end{remark}


\subsection{Numerical study: estimation error of  covariance matrices}\label{subsec:simcov}

In this simulation study, we compare the proposed estimator $\hat{\bm{\Sigma}}_n^{B}(\hat{\bm{p}}_n)$ in \eqref{def:boot_est} with the covariance matrix $\bm{\Sigma}_n(\bm{p}_0)$ of the scaled within-stratum imbalances, $n^{-1/2} \bm{S}_n$, under minimization method. Specifically, for the minimization method, we set the weight vector $\bm{\omega}=(1/K,\ldots,1/K)$, and  the function $g$ to be $g_q(\cdot)$ in Equation \eqref{def:gfunction} with $q = 0.3$.

We consider the sample size $n \in \{200,500,1000\}$  and the following two setups for the stratification factors.

\textbf{Setup 1.} two stratification factors each with two levels, i.e., $m = 2 \times 2$, and  $\bm{p}_0$ is one of the following: 
$\bm{p}_0^{(1)}=(1/4,1/4,1/4,1/4);\bm{p}_0^{(2)}=(1/3,1/6,1/6,1/3);\bm{p}_0^{(3)}=(2/3,1/9,1/9,1/9)$. 

\textbf{Setup 2.} three stratification factors with levels $2,2$ and $\kappa \in \{3,5\}$ respectively, i.e., $m=2\times2\times \kappa$, and    $\bm{p}_0 = (1/m,\ldots,1/m)$. 

For the estimator $\hat{\bm{\Sigma}}_n^{B}(\hat{\bm{p}}_n)$, we let $B = 1,000$ and consider three choices for $\hat{\bm{p}}_n$ as discussed in Subsection \ref{subsec:covestimate}. The first is the empirical probability mass function (pmf) of $\{\bm{Z}_1,\ldots,\bm{Z}_n\}$, denoted by ``Emp" in the reported tables; the second assumes  independence among the stratification factors, denoted by ``Ind";  the third is the empirical pmf of a larger sample with size $N = 4n$, denoted by ``$N=4n$". For $\bm{\Sigma}_n(\bm{p}_0)$, it is estimated by $\hat{\bm{\Sigma}}_n^{B'}(\bm{p}_0)$ with $B'=5\times 10^5$ and assuming that $\bm{p}_0$ is known. Since $B'$ is large, $\hat{\bm{\Sigma}}_n^{B'}(\bm{p}_0)$ is viewed as a surrogate for $\bm{\Sigma}_n(\bm{p}_0)$, and thus we evaluate the loss of accuracy due to the estimation error of $\hat{\bm{p}}_n$  and a small number of Monte Carlo repetitions.

In Table \ref{tab1} and \ref{tab2}, we report the estimated difference $\|\hat{\bm{\Sigma}}^B_n(\hat{\bm{p}}_n)-\bm{\Sigma}_n(\bm{p}_0)\|_{\max}$ (``Sup"),  and relative difference  $\|\hat{\bm{\Sigma}}^B_n(\hat{\bm{p}}_n)-\bm{\Sigma}_n(\bm{p}_0)\|_{\max}/\|\bm{\Sigma}_n({\bm{p}}_0)\|_{\max}$ (``Rel Sup"), each based on the average over $10^4$ repetitions, where for a matrix $\bm{A}$, $\|\bm{A}\|_{\max}$ is the maximum of the absolute values of its entries.
From Table \ref{tab1}, when the number of strata $m$ is small, the proposed estimator with the empirical pmf $\hat{\bm{p}}_n$ approximates well the target $\bm{\Sigma}_n(\bm{p}_0)$, even in the case of positive correlation between the two stratification factors (i.e., $\bm{p}_0^{(2)}$) or the existence of a dominant stratum (i.e., $\bm{p}_0^{(3)}$). From Table \ref{tab2},  the  estimation error increases with $m$  and is relatively large  
when the empirical pmf is used in the smaller sample size cases, but we may improve the performance of  $\hat{\bm{\Sigma}}_n^{B}(\hat{\bm{p}}_n)$ with a more accurate $\hat{\bm{p}}_n$, which may be obtained under additional assumptions or data.

\begin{table}[bt]
\centering
\caption{Setup 1 with $m=2\times 2$ and with $\hat{\bm{p}}_n$ being the empirical pmf.}
\label{tab1}
\begin{tabular}{ccccccc}
\headrow
                &&$\bm{p}_0^{(1)}$  &  & $\bm{p}_0^{(2)}$ & & $\bm{p}_0^{(3)}$
                \\ 
\hiderowcolors
\multirow{2}{4em}{$n=200$} &\text{Sup}
& 7.62E-4  && 9.07E-4  && 8.85E-4   \\  
&\text{Rel Sup}
& 4.13E-2  && 5.38E-2  && 7.09E-2  \\ 
\multirow{2}{4em}{$n=500$} &\text{Sup}
& 6.39E-4  && 6.74E-4  && 5.60E-4   \\  
&\text{Rel Sup}
& 3.74E-2  && 4.40E-2  && 5.39E-2  \\ 
\multirow{2}{4em}{$n=1000$} &\text{Sup}
& 6.13E-4  && 5.87E-4  && 4.68E-4   \\  
&\text{Rel Sup}
& 3.67E-2  && 3.99E-2  && 4.81E-2  \\ 
\end{tabular}
\caption{Setup 2 with $m = 2\times2\times \kappa$, $\kappa = 3,5$, and $\bm{p}_0=(1/m,\ldots,1/m)^\text{T}$. The additional data case is considered with $N=4n$.
}
\label{tab2}
\begin{tabular}{ccccccccccc}
\headrow
                && \multicolumn{3}{c}{$m=2\times 2\times 3$}  &  &\multicolumn{3}{c}{$m=2\times 2\times 5$}  \\ 
                    \cline{3-5} \cline{7-9} 
                && Emp  & Ind & $N=4n$ &
                & Emp  & Ind & $N=4n$ \\ \hiderowcolors
\multirow{2}{4em}{$n=200$} &\text{Sup}
& 1.56E-3 & 1.19E-3  & 9.01E-4  && 1.54E-3 & 1.07E-3  & 8.22E-4 \\  
&\text{Rel Sup}
& 1.19E-1  & 9.08E-2  & 6.84E-2  && 1.73E-1  & 1.21E-1  & 9.28E-2  \\  
\multirow{2}{4em}{$n=500$} &\text{Sup}
& 9.96E-4  & 8.07E-4 & 6.36E-4  && 9.41E-4 & 6.82E-4   & 5.44E-4  \\  
&\text{Rel Sup}
& 7.84E-2 & 6.35E-2 & 5.03E-2  && 1.11E-1  & 8.05E-2  & 6.39E-2\\  
\multirow{2}{4em}{$n=1000$} &\text{Sup}
& 7.57E-4  & 6.51E-4   & 5.39E-4  && 6.79E-4 & 5.26E-4  & 4.36E-4 \\  
&\text{Rel Sup}
& 6.04E-2 & 5.20E-2  & 4.31E-2  && 8.14E-2  & 6.30E-2  & 5.21E-2 \\  
\hline
\end{tabular}
\end{table}

\subsection{Numerical study: variance inflation along some directions}\label{subsec:dominate}

In this subsection, our goal is to demonstrate via examples that along some directions, there is an increased  variance of within-stratum imbalances under the minimization method compared to the simple randomization, which, as we shall see in Subsection \ref{subsec:inflation}, may lead to an inflated Type-I error of statistical tests developed under the simple randomization.
 
We consider two stratification factors each with two levels, i.e., $m = 2\times 2$. For any positive $\bm{p}_0$ and general function $g(\cdot)$ in Step 3 of the minimization method, essentially by arguments in \cite{hu},  we show in Appendix \ref{app:matrix} that
\begin{equation}\label{def:specialcov}
    \bm{\Sigma}_\infty(\bm{p}_0;g) = \nu(\bm{p}_0; g) \times \begin{pmatrix}
        \bm{v}^*, -\bm{v}^*, -\bm{v}^*, \bm{v}^*
    \end{pmatrix},
\end{equation}
where $\nu(\bm{p}_0; g)$ is a scalar, $ \bm{v}^* = (1,-1,-1,1)^\text{T}$, and we make the dependence on $g(\cdot)$ explicit. Further, under the simple randomization, the limiting covariance matrix of $n^{-1/2}\bm{S}_n$ is $\text{diag}(\bm{p}_0)$,   the $m\times m$ diagonal matrix with the diagonal vector being $\bm{p}_0$.

In the simulation study, we consider $\bm{p}_0 = (1/4,1/4,1/4,1/4)$, the weight vector $\bm{\omega}=(1/2,1/2)$  and the function $g_q(\cdot)$ in \eqref{def:gfunction} with various $q \in (0.1,0.45)$; since $\bm{p}_0$ is fixed, we write $\nu(q)$ for $\nu(\bm{p}_0;g_q)$. In this case, the smallest eigenvalue for the difference, $\text{diag}(\bm{p}_0) - \bm{\Sigma}_\infty(\bm{p}_0;g_q)$, between the two limiting covariance matrices  is $1/4-4\nu(q)$, which corresponds to the eigenvector $\bm{v}^*/2$. For each $q$, we estimate $\nu(q)$ by the first entry of $\hat{\bm{\Sigma}}^{B}_n(\bm{p}_0)$ in \eqref{def:boot_est} with $B = 4\times 10^5$   for $n \in \{10^4,5\times 10^4\}$, assuming that $\bm{p}_0$ is known. 
Further, based on the $B$ Monte Carlo samples, we construct a $95\%$ confidence interval(CI) for $\nu(q)$.  The results are reported in Figure \ref{fig:dominate}, where the black lines and shaded bands indicate the estimates and $95\%$CIs of  $1/4-4\nu(q)$ respectively.

From Figure \ref{fig:dominate}, we may conclude that for
$\bm{p}_0 = (1/4,1/4,1/4,1/4)$, and  $g_q(\cdot)$ with $0.25 \leq q \leq 0.4$, the smallest eigenvalue of $\text{diag}(\bm{p}_0) - \bm{\Sigma}_\infty(\bm{p}_0;g_q)$ is negative, which implies that along some directions, such as $\bm{v}^*/2$, the variance is larger under the minimization method than under the simple randomization. This  phenomenon does \textit{not} occur in most covariate-adaptive designs, including the stratified version of the urn, permuted block, and biased coin design. 

Further, when $q \leq 0.2$, the limiting covariance matrix  $\bm{\Sigma}_\infty(\bm{p}_0;g_q) $ under the minimization method is dominated by that under the simple randomization, i.e., $\text{diag}(\bm{p}_0)$. To the best of our knowledge, providing sufficient conditions for this dominance is still an open question.

\begin{remark}    
Our simulation setup is also considered in \cite{johnson2022validity}, which establishes the structural result \eqref{def:specialcov} under the assumption that $\bm{p}_0 =(1/4,1/4,1/4,1/4)$. As mentioned above, by arguments in \cite{hu}, \eqref{def:specialcov} holds for general $\bm{p}_0$. We note that our contribution is in proposing a consistent estimator for $\bm{\Sigma}_{\infty}(\bm{p}_0)$ when $\bm{p}_0$ needs to be estimated. Further, we complement the results of \cite{johnson2022validity} in that we provide examples (i.e., large $q$), under which $\bm{\Sigma}_{\infty}(\bm{p}_0;g_q)$ is not dominated by  the limiting covariance matrix  under the simple randomization.
\end{remark}

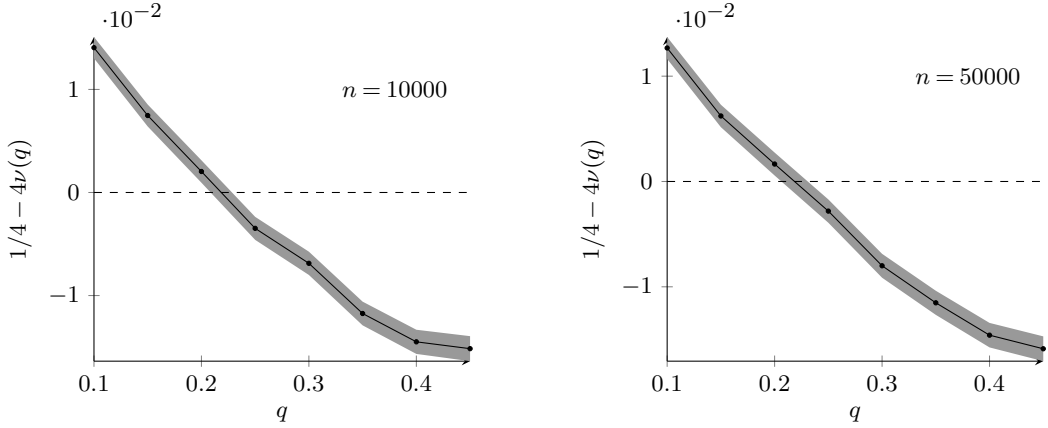
\begin{figure}[bt]
\caption{Estimates and $95\%$ confidence intervals  of $1/4-4\nu(q)$ under the setup where $m=2\times 2$ with equal prevalence $\bm{p}_0 = (1/4,\ldots,1/4)$.}
    \begin{tikzpicture}
        \begin{axis}[
            scale only axis,
            axis lines = left,
            xlabel = \(q\),
            ylabel = {\(1/4-4\nu(q)\)}
        ]
        \addplot [name path=lower, fill=none, draw=none] table [
        x=X, y expr=\thisrow{Y} - \thisrow{CI95}]{data.txt};
        \addplot [name path=upper, fill=none, draw=none] table [
        x=X, y expr=\thisrow{Y} + \thisrow{CI95}]{data.txt};
        \addplot[black!40] fill between[of=lower and upper];
        \addplot [mark=*,mark size = 0.8pt] table [x=X, y=Y]{data.txt};
        \addplot [domain=0.1:0.45,dashed]{0};
        \draw node (bar1) at (0.38,0.01){$n=10000$};
    \end{axis}

        \begin{axis}[ 
        scale only axis,
            axis lines = left,
            xlabel = \(q\),
            xshift = 3in,
            ylabel = {\(1/4-4\nu(q)\)},
        ]
        \addplot [name path=lower, fill=none, draw=none] table [
        x=X, y expr=\thisrow{Y} - \thisrow{CI95}]{data50000.txt};
        \addplot [name path=upper, fill=none, draw=none] table [
        x=X, y expr=\thisrow{Y} + \thisrow{CI95}]{data50000.txt};
        \addplot[black!40] fill between[of=lower and upper];
        \addplot [mark=*,mark size = 0.8pt] table [x=X, y=Y]{data50000.txt};
        \addplot [domain=0.1:0.45,dashed]{0};
        \draw node (bar1) at (0.38,0.01){$n=50000$};
    \end{axis}
\end{tikzpicture}
\label{fig:dominate}
\end{figure}

\section{Adjustment to robust tests for survival analysis} \label{sec:survival}

We first recall the robust tests \citep{lin} for treatment effects with survival outcomes, and its adjustment under the minimization method \citep{johnson2022validity,ye}, which requires a consistent estimator for the limiting covariance matrix $\bm{\Sigma}_{\infty}(p_0)$.

\subsection{Adjusted robust tests for treatment effect}\label{subsec:adjtest}

We start with the setup for testing treatment effects in survival analysis. For each subject, we observe 
$\bm{Z},I, X = \min(T_I,C_I),\delta= \mathbb{1}(T_I \leq C_I),\bm{W}$, 
where $\bm{Z}$ is the stratification factor used in randomization, $I$ is the binary indicator for treatment group, $\{T_0, T_1\}$ (resp.~$\{C_0,C_1\}$) are the potential survival (resp.~censoring) time for $I=0$ and $I=1$, and $\bm{W}$ covariates used for statistical inference. Further, let $\bm{V}$ be the collection of all covariates, including unobserved ones; in particular, $\bm{Z}$ and $\bm{W}$ are deterministic functions of $\bm{V}$. 

Denote by $h(\cdot \mid\bm{v},1)$ (resp. $h(\cdot \mid\bm{v},0)$) the true hazard function of $T_1$ (resp. $T_0$) when $\bm{V} = \bm{v}$, and the goal is to test  the null $H_0:h(t\mid \bm{v},1)=h(t\mid \bm{v},0)$ for all  $t \in \mathbb{R}$ and $\bm{v}$, based on $n$-observations $\{(\bm{Z}_i, I_i, X_i, \delta_{i}, \bm{W}_i): i \in [n]\}$, where $\{\bm{Z}_i, \bm{W}_i, T_{i,1}, T_{i,0}, C_{i,1},C_{i,0}: i \in [n]\}$ are independent with the same distribution as $\{\bm{Z},\bm{W}, T_0, T_1, C_0, C_1\}$, and $\{I_i: i \in [n]\}$ depend on the adopted  covariate-adaptive design.

Under the simple randomization, \cite{lin} proposes a test for $H_0$ that is robust against model mis-specification. In particular,
consider the following \textit{working} Cox proportional hazards model,
$$
    \tilde{h}(t\mid\bm{V},I) = h_0(t)\exp(\theta I+\bm{\beta^\text{T} W}), 
$$
where $h_0$ denotes a baseline hazard, and test $H_0': \theta = 0$. Under $H_0'$, denote by $\hat{\bm{\beta}}_n$ the maximum partial likelihood estimator, and $U_n(\hat{\bm{\beta}}_n)$  the usual score test statistic, where
$$
    U_n(\bm{\beta}) = n^{-1/2}\sum_{i=1}^{n}\int_{0}^{\infty} \Xi_i\left(\bm{\beta},X_i\right) dN_i(t),\;  \text{ with } \Xi_i\left(\bm{\beta},t\right)=I_i - \frac{\sum_{\ell=1}^{n}I_\ell Y_\ell(t)\exp(\bm{\beta}^\text{T}\bm{W}_\ell)}{\sum_{\ell=1}^{n}Y_\ell(t)\exp(\bm{\beta}^\text{T}\bm{W}_\ell)},
$$
and $Y_i(t) = \mathbb{1}(X_i \geq t)$, $N_{i}(t) = \delta_{i}\mathbb{1}(X_{i}\leq t)$.  Given a nominal level $\alpha$, \cite{lin} proposes to reject the null $H_0$ when $|T_{\text{R}}| \geq \Phi^{-1}(\alpha/2)$, where $\Phi$ is the cumulative distribution function of $\mathcal{N}_1(0,1)$, and the statistic  $T_{\text{R}} =U_n(\hat{\bm{\beta}}_n)/\{\hat{B}_R(\hat{\bm{\beta}}_n)\}^{1/2}$ with $\hat{B}_{\text{R}}(\hat{\bm{\beta}}_n) =n^{-1}\sum_{i=1}^{n}\hat{O}_i^2$, and
$$\hat{O}_i = \delta_i\Xi_i(\hat{\bm{\beta}}_n,X_i)
-\sum_{j=1}^{n}\frac{\delta_jY_i(X_j)\exp(\hat{\bm{\beta}}_n^\text{T}\bm{W}_i)}{\sum_{\ell=1}^{n}Y_\ell(X_j)\exp(\hat{\bm{\beta}}_n^\text{T}\bm{W}_\ell)} \Xi_i(\hat{\bm{\beta}}_n,X_j).$$
Under the \textit{simple randomization},
this test achieves an asymptotically valid size for testing $H_0$ even with a mis-specified working model \citep{lin}.   

Under the \textit{minimization} method, the denominator in the robust test statistic, $T_R$, needs to be adjusted \citep{ye,johnson2022validity}, which requires a consistent estimator for $\bm{\Sigma}_{\infty}(\bm{p}_0)$. Specifically,  
 for each $\bm{z} \in \mathcal{Z}$ and $j \in \{0,1\}$, denote by $n_{\bm{z}}$ the size of $\{i \in [n]: \bm{Z}_i = \bm{z}\}$, and by 
$\hat{E}_{\bm{z}j}$ and $\hat{V}_{\bm{z}j}$  the sample mean and variance  of $\{\hat{O}_i: \bm{Z}_i = \bm{z}, I_i=j; i \in [n],j=0,1 \}$. Further,
define
\begin{align}
    \label{def:B_adj}
    \hat{B}_{\text{Adj}}(\hat{\bm{\beta}}_n)
    = n^{-1} \sum_{\bm{z}\in\mathcal{Z}}n_{\bm{z}}\left( \hat{V}_{\bm{z}1}+\hat{V}_{\bm{z}0}\right)/2 +\hat{\bm{G}}^\text{T}\hat{\bm{\Sigma}}^B_n(\hat{\bm{p}}_n)\hat{\bm{G}},
\end{align}
where  $\hat{\bm{\Sigma}}^B_n(\hat{\bm{p}}_n)$ is the proposed Monte Carlo  estimator in Equation \eqref{def:boot_est}, and $\hat{\bm{G}} = (\hat{E}_{\bm{z}1}-\hat{E}_{\bm{z}0})/2$. 
Finally, we reject the null if
$|T_{\text{Adj}}| \geq \Phi^{-1}(\alpha/2)$, where $T_{\text{Adj}}=U_n(\hat{\bm{\beta}}_n)/\{\hat{B}_{\text{Adj}}(\hat{\bm{\beta}}_n)\}^{1/2}$. Under the \textit{minimization} design and  possible model mis-specification,
by \cite{ye,johnson2022validity}, under conditions C1-C3 in \cite{ye}, the adjusted test is asymptotically valid, that is, as $\min\{n,B\} \to \infty$,
$\mathbb{P}\{|T_{\text{Adj}}|>\Phi^{-1}(\alpha/2)\}\to  \alpha$, while the robust test may not be:
\begin{equation}
\label{def:R_size}
\lim_{n \to \infty} \mathbb{P}\{|T_\text{R}|>\Phi^{-1}(\alpha/2)\} =  2\Phi\left[-\Phi^{-1}\left(\alpha/2\right)\left\{\frac{\psi+\bm{G}^{\text{T}}\text{diag}(\bm{p}_0)\bm{G}}{\psi+\bm{G}^{\text{T}}\bm{\Sigma}_\infty(\bm{p}_0) \bm{G}}\right\}^{1/2}\right],
\end{equation}
where $\psi$ and $\bm{G}$ are introduced in \cite{ye} and also defined in Appendix \ref{app survival}. Recall that $\text{diag}(\bm{p}_0)$ is the limiting covariance of $n^{-1/2}\bm{S}_n$ under the simple randomization. 
If 
$\bm{G}^{\text{T}}\text{diag}(\bm{p}_0)\bm{G} < \bm{G}^{\text{T}}\bm{\Sigma}_\infty(\bm{p}_0) \bm{G}$, which could happen if 
$\bm{\Sigma}_\infty(\bm{p}_0)$ is not dominated by $\text{diag}(\bm{p}_0)$, then the robust test  $T_\text{R}$ has an asymptotically inflated size under the minimization method, which we demonstrate in Subsection \ref{subsec:inflation}.

\begin{remark}
In \citet[Equation (20)]{ye}, it is shown that $n^{-1} \sum_{\bm{z}\in\mathcal{Z}}n_{\bm{z}}\left( \hat{V}_{\bm{z}1}+\hat{V}_{\bm{z}0}\right)/2$ and 
$\hat{\bm{G}}$ are consistent for $\psi$ and $\bm{G}$ respectively. In this work, we propose a consistent estimator $\hat{\bm{\Sigma}}^B_n(\hat{\bm{p}}_n)$ for $\bm{\Sigma}_{\infty}(\bm{p}_0)$. As a result, by \citet[Theorem 1]{johnson2022validity}, we have that $\hat{B}_{\text{Adj}}(\hat{\bm{\beta}}_n)$ is a consistent estimator for the asymptotic variance of $U_n(\hat{\bm{\beta}}_n)$.
\end{remark}

\begin{remark}
The adjusted robust test $T_{\text{Adj}}$ is proposed for other covariate-adaptive designs in \cite{ye}, which is extended to the minimization method by \cite{johnson2022validity}. Note that  \cite{johnson2022validity} proposes a consistent estimator for $\bm{\Sigma}_{\infty}(\bm{p}_0)$ 
when each stratum has the same prevalence, that is,
 $\bm{p}_0 = (1/m,\ldots,1/m)$ (in particular, $\bm{p}_0$ is known in this case). Our main contribution is to propose the consistent estimator  $\hat{\bm{\Sigma}}^B_n(\hat{\bm{p}}_n)$ in \eqref{def:boot_est} with a general, unknown $\bm{p}_0$.
\end{remark}

\subsection{Numerical study: size and power of the adjusted test} \label{subsec:relevantsim}
In this simulation study, we compare four tests in terms of size and power with possibly mis-specified working models, including the adjusted robust test $T_{\text{Adj}}$, robust test $T_{\text{R}}$ without adjustment, unstratified log-rank test $T_{\text{L}}$ and  stratified log-rank test $T_{\text{SL}}$. The first two are introduced in the previous subsection;
$T_{\text{L}}$ corresponds to $T_{\text{R}}$ with covariates $\bm{W}_i = 0$, while $T_{\text{SL}}$ to $T_{\text{R}}$ with $\bm{W}_i = \{\mathbb{1}(\bm{Z}_i = \bm{z}): \bm{z} \in \mathcal{Z}\}$, i.e., a weighted average of stratum specific log-rank test statistics. For $T_{\text{Adj}}$, we use the empirical probability mass function $\hat{\bm{p}}_n$ and $B=1,000$ for $\hat{\bm{\Sigma}}_{n}^{B}(\hat{\bm{p}}_n)$ in Equation \eqref{def:boot_est}.

For the minimization method, we set weight vector ${\omega}=(1/K,\ldots,1/K)$ and use $g_{q}(\cdot)$ in Equation \eqref{def:gfunction} with $q = 2/3$  in Step 3.  For each $k \in [K]$, $\bm{Z}^{k}$ has a discrete uniform distribution over $\{0,\ldots,m_k-1\}$, and $\bm{Z}^{1},\ldots,\bm{Z}^K$ are independent. We consider the following cases with $n \in \{200,500,750\}$, where $h_0=\log(2)/12$. We recall that $h(\cdot\mid {V},I)$ and $\tilde{h}(\cdot\mid \bm{V},I)$ are the true and working hazard rate function respectively, and that $C$ denotes the censoring time. In addition, 
denote by $\text{Unif}(\ell,u)$ (resp. $\varepsilon_1$) a random variable that has the uniform distribution over $(\ell,u)$ (resp. exponential distribution with rate $1$), which is independent of all other random variables.

\textbf{Case 1.} $m_1 = 2$ and $m_2 = 3$, i.e., $m=2\times 3$. Let ${W}_1 = \bm{Z}^1$, $\bm{W}_2 = \mathbb{1}(\bm{Z}^2 = 0)$,
$\bm{W}_3 = \mathbb{1}(\bm{Z}^2 = 1)$.  
$h(t\mid \bm{V},I) = h_0 \text{exp}(\theta I+1.5 \bm{W}_1-\bm{W}_2-0.5 \bm{W}_3)$  for $t \in \mathbb{R}$, and 
 $\tilde{h}(\cdot \mid \bm{V},I)=h(\cdot\mid \bm{V},I)$. 
 $C = \text{Unif}(20,40)$.

 \textbf{Case 2.} 
$m_1=m_2=2$ and $m_3 = 5$, i.e. $m=2\times 2\times 5$.  
Let $\bm{W}_1 = \bm{Z}^1$, and $\bm{W}_2$ has the 
$\mathcal{N}_1(0,1)$ distribution and is independent from $\bm{Z}$.
$h(t\mid \bm{V},I) = h_0\text{exp}(\theta I -1.5\bm{W}_1+ 2.5\bm{W}_2)$ for $t \in \mathbb{R}$, and $\tilde{h}(\cdot \mid \bm{V},I)=h(\cdot\mid \bm{V},I)$. 
$C = \text{Unif}(40,70)$.

\textbf{Case 3.} 
$m_1 = 2$ and $m_2 = 4$, i.e., $m=2\times 4$. Let $\bm{W}_2$ has the 
$\mathcal{N}_1(0,1)$ distribution; $\bm{W}_1 = \bm{Z}^1$ is independent from $\bm{W}_2$, and   $\bm{Z}^2 = \sum_{j=1}^{3} \mathbb{1}(\bm{W}_2 \geq \Phi^{-1}(j/4))$.
The  failure time  
$T = \text{exp}(\theta I+1.5\bm{W}_1+1.5(\bm{W}_2)^2)+\varepsilon_1$, and $\tilde{h}(t \mid \bm{V},I) =h_0\text{exp}(\theta I+\beta_1 \bm{W}_1+\beta_2\bm{W}_2)$ for $t \in \mathbb{R}$. 
$C =\text{Unif}(0,1)$.

\textbf{Case 4($\kappa$).}
$m_1 =2$, $m_2 = 2$, $m_3 = \kappa \in \{5,10\}$, i.e.,
$m = 2\times 2 \times \kappa$. Let $\bm{W}_1=\bm{Z}^1$ independent from $\bm{W}_2,\bm{W}_3$, which are i.i.d. $\mathcal{N}_1(0,1)$ random variables; $\bm{Z}^2 = \mathbb{1}(\bm{W}_2 \geq 0)$.  $h(t\mid \bm{V},I)=h_0\text{exp}(\theta I+2\bm{W}_1+2.5\bm{W}_3)$, and $\tilde{h}(t\mid \bm{V},I) = h_0\text{exp}(\theta I+\beta_1\bm{W}_3)$ for $t\in \mathbb{R}$. 
$C = \text{Unif}(40,70)$.

The mean censoring rate for Case 1-3 is respectively $20.3\%$, $37.3\%$, $18.9\%$, while for Case 4 with $\kappa =5$ and $10$, it is $28.4\%$ and $18.1\%$. Case $1$ and $2$ consider correctly specified working models, and Case 2 has a larger number of strata. Case 3 considers model mis-specification in terms of the proportional hazard assumption, while Case $4$ in terms of missing covariates. 

In Table \ref{table:size}, we report the sizes of the four tests under the above cases with $\theta=0$ for $n\in\{200,500,750\}$ at the $5\%$ nominal level based on $10^5$ repetitions. Table \ref{table:size} shows that the unstratified log-rank test $T_{\text{L}}$, which is widely used in clinical trials, tends to be conservative under the minimization method. Further, 
if  \textit{all} the stratification factors used in the randomization are included calculation of test statistic, 
the stratified log-rank test $T_{\text{SL}}$ maintains a correct test size, which is established in \cite{ye}.  The robust test $T_{\text{R}}$ under the minimization method tends to be conservative in the presence of model mis-specification. With the adjustment using the proposed covariance estimator, we find $T_{\text{Adj}}$ corrects the size of $T_{\text{R}}$ in the right direction and attains a size that is close to the nominal level for a moderate sample size.




In Figure \ref{fig:power1} and \ref{fig:power2}, we further report the power of the above tests corresponding to various $\theta$ in the above cases for $n = 500$ at the $5\%$ nominal level based on $2,000$ repetitions. Compared to the stratified log-rank test $T_{\text{SL}}$, the robust test $T_{\text{R}}$ and its adjusted version $T_{\text{Adj}}$ have a much larger power when an important covariate, in addition to those stratification factors, is included in the working model, such as in $\bm{W}_2$ in Case 2 and $\bm{W}_3$ in Case 4. When the robust test $T_{\text{R}}$ is conservative, such as in Case 4, the adjusted version $T_{\text{Adj}}$ is more powerful. These empirical results are in line with \citet[Corollary 1 and Theorem 2]{ye}.

\begin{table}[bt]
    \centering
    \caption{Empirical Test Sizes under $4$ cases. $T_{\text{L}}$: Log-rank test; $T_{\text{R}}$: Robust test; $T_{\text{SL}}$: Stratified Log-rank test; $T_{\text{Adj}}$: Adjusted robust test.
}
\label{table:size}
    \begin{tabular}{ccccccccccccccc}
    \headrow
        & \multicolumn{4}{c}{$n=200$}  &  &\multicolumn{4}{c}{$n=500$} & & \multicolumn{4}{c}{$n=750$}\\ 
                & $T_{\text{L}}$& $T_{\text{R}}$ &$T_{\text{SL}}$ & $T_{\text{Adj}}$ &
                & $T_{\text{L}}$&$T_{\text{R}}$ &$T_{\text{SL}}$ & $T_{\text{Adj}}$&
                & $T_{\text{L}}$&$T_{\text{R}}$ &$T_{\text{SL}}$ & $T_{\text{Adj}}$\\ \hline
    $\text{C} 1$ &1.8& 5.5 & 5.1 & 5.5& & 1.7& 5.2 & 5.0& 5.3& &1.6 & 5.1 & 4.9 & 5.2\\ 
    $\text{C} 2$ & 4.5&5.7& 4.9 & 5.8& &4.5&5.1&5.0& 5.3& &4.3&5.4&4.9& 5.1\\
    $\text{C} 3$ & 2.2 & 3.4 & 5.0 & 6.1& &2.0&2.9 &5.2& 5.5& &1.9&2.7&5.0& 5.3\\
    $\text{C} 4(\kappa=5)$ &4.2&2.5 &5.1 & 6.2& &3.9&2.1&5.0&5.4 & &3.9&2.0&5.0& 5.2\\
    $\text{C} 4(\kappa=10)$ &4.2&2.3 & 5.0 &  8.4& &4.0&2.0&5.2& 5.4& &4.1&1.9&4.9& 5.1\\
    \hline
\end{tabular}
\end{table}

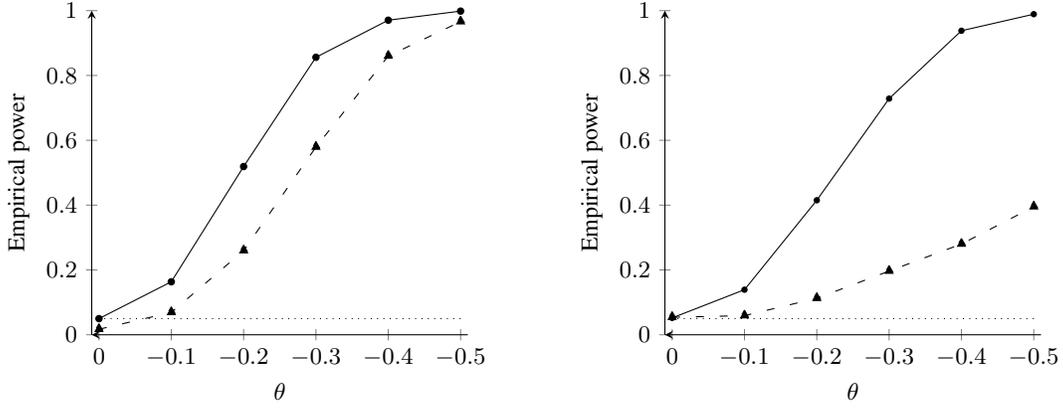
\begin{figure}[t!]
     \caption{Empirical power for Case 1 (left) and Case 2 (right). The solid line with circle marker is for the adjusted robust test, the dashed dotted line with diamond marker  for the conventional robust test, the loosely dashed line with triangle marker for the stratified log-rank test, and the densely dashed with square marker  for the log-rank test. The horizontal dotted line indicates the $5\%$ nominal level. \textbf{Visibly close lines are combined}: the robust test and the adjusted robust test are combined, and so are the log-rank test and the stratified log-rank test. }
    \centering
    \begin{tikzpicture}
        \begin{axis}[
            scale only axis,
            axis lines = left,
            ylabel = {\(\text{Empirical power}\)},
            ymin=0, ymax=1,
            xlabel = {\(\theta\)},
            xmax = 0.01,xmin = -0.51,
            x dir=reverse,
            legend pos=north west
        ]
        \addplot[black,mark=triangle*, mark size=2.1pt,loosely dashed]coordinates{(0,0.017)(-.1,0.069)(-.2,0.2595)(-.3,0.579)(-.4,0.86)(-.5,0.966)};
        \addplot [black,mark=*,mark size=1.2pt] coordinates{(0,0.0498)(-0.1,0.1635)(-0.2,0.519)(-0.3,0.856)(-0.4,0.9705)(-0.5, 0.9985)};
        \addplot [domain=-0.5:0,dotted]{0.05};
        \end{axis}

    \begin{axis}[
            scale only axis,
            axis lines = left,
            ylabel = {\(\text{Empirical power}\)},
            ymin=0, ymax=1,
            xlabel = {\(\theta\)},
            xmax = 0.01,xmin = -0.51,
            x dir=reverse,
            xshift=3in,
            legend pos=north west
        ]
        \addplot [black,mark=*,mark size=1pt] coordinates{(0,0.052)(-.1,0.139)(-.2,0.415)(-.3,0.7285)(-.4,0.938)(-.5,0.989)};
        \addplot [black, mark = triangle*, mark size =2.1pt,loosely dashed] coordinates{(0,0.054)(-.1,0.059)(-.2,0.1125)(-.3,0.1965)(-.4,0.28)(-.5,0.395)};
        \addplot [domain=-0.5:0,dotted]{0.05};
    \end{axis}
    \end{tikzpicture}
    \label{fig:power1}
\end{figure}

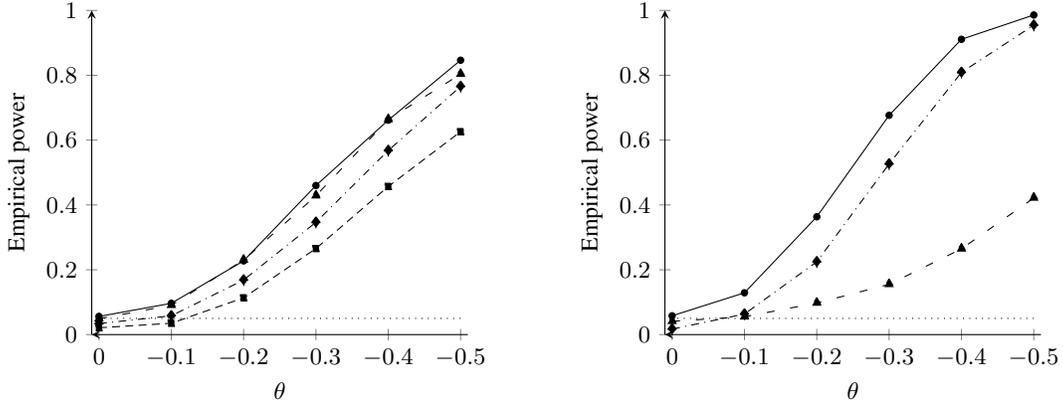
\begin{figure}
\caption{Empirical power of the four tests for Case 3 (left) and Case 4 with $\kappa=10$ (right). See explanations for each line in the caption of Figure \ref{fig:power1}. Lines for the  log-rank test and the stratified log-rank test are combined for Case 4.
}
    \begin{tikzpicture} 
    \begin{axis}[
            scale only axis,
            axis lines = left,
            ylabel = {\(\text{Empirical power}\)},
            ymin=0, ymax=1,
            xlabel = {\(\theta\)},
            xmax = 0.01,xmin = -0.51,
            x dir=reverse,
            legend pos=north west
        ]
        \addplot[black,mark=square*, mark size=1.2pt,densely dashed]coordinates{(0,0.021)(-.1,0.035)(-.2,0.1135)(-.3,0.2655)(-.4,0.457)(-.5,0.626)};
        \addplot [black,mark=diamond*,mark size=2.1pt,dashdotted] coordinates{(0,0.034)(-.1,0.058)(-.2,0.1685)(-.3,0.3465)(-.4,0.568)(-.5,0.7655)};
        \addplot [black, mark=triangle*, mark size =2.1pt,loosely dashed] coordinates{(0,0.049)(-.1,0.0905)(-.2,0.23)(-.3,.4285)(-.4,0.6635)(-.5,0.8035)};
         \addplot [black,mark=*,mark size=1.2pt] coordinates{(0,0.056)(-.1,0.0965)(-.2,0.2275)(-.3, 0.46)(-.4,0.6615789)(-.5,0.8465)};
        \addplot [domain=-0.5:0,dotted]{0.05};
    \end{axis}
    
    \begin{axis}[
            scale only axis,
            axis lines = left,
            ylabel = {\(\text{Empirical power}\)},
            ymin=0, ymax=1,
            xlabel = {\(\theta\)},
            xmax = 0.01,xmin = -0.51,
            x dir=reverse,
            xshift=3in,
            legend pos=north west
        ]
      \addplot[black, mark = triangle*, mark size =2.1pt,loosely dashed]coordinates{(0,0.041)(-.1,0.055)(-.2,0.0975)(-.3,0.155)(-.4,0.2645)(-.5,0.4215)};
        \addplot [black,mark=diamond*,mark size=2.1pt,dashdotted] coordinates{(0,0.017)(-.1, 0.0635)(-.2,0.2245)(-.3,0.5265)(-.4,0.809)(-.5,0.9545)};
         \addplot [black,mark=*,mark size=1.2pt] coordinates{(0,0.058)(-.1,0.129)(-.2,0.3635)(-.3,0.6765)(-.4,0.911)(-.5,0.986)};
        \addplot [domain=-0.5:0,dotted]{0.05};
    \end{axis}
    \end{tikzpicture}
\label{fig:power2}
\end{figure}

\subsection{Numerical study: size inflation under the minimization method}\label{subsec:inflation}

Under the minimization method, the asymptotic size of the robust test $T_{\text{R}}$ \textit{without} adjustment is given in Equation \eqref{def:R_size}. 
In Subsection \ref{subsec:dominate}, we show  by simulation that  $\bm{\Sigma}_\infty(\bm{p}_0)$ is not dominated by $\text{diag}(\bm{p}_0)$ under some parameter configurations, and thus $T_{\text{R}}$ may have an inflated size, which we now demonstrate.

Specifically, we consider two stratification factors each with two levels, i.e., $m = 2\times 2$, and $\bm{p}_0 = (1/4,\ldots,1/4)$. For the minimization method, we set weight vector $\bm{\omega}=(1/2,1/2)$ and use $g_{q}(\cdot)$ in Equation \eqref{def:gfunction} with $q = 0.4$  in Step 3.
We consider two models for the failure time. In the first model termed ``Exponential", the true hazard function $h(t\mid \bm{V},I) = 0.1\times\mathbb{1}({Z}^1=\bm{Z}^2)+2\times\mathbb{1}(\bm{Z}^1\neq\bm{Z}^2)$ for $t \in \mathbb{R}$; in the second model termed ``Gamma",
$T = \Gamma_1\mathbb{1}(\bm{Z}^1=\bm{Z}^2)+\Gamma_2\mathbb{1}(\bm{Z}^1\neq\bm{Z}^2)$, where  $\Gamma_1$ (resp. $\Gamma_2$) has the Gamma distribution with the (shape, rate) parameter $(10,2)$ (resp. $(40,2)$). We assume no censoring, and use the
working hazard function $h_0(t)\text{exp}\{\theta I+\beta_1\mathbb{1}(\bm{Z}^1=1)\}$, that is, partial stratification on $\bm{Z}^1$.

As discussed in Subsection \ref{subsec:dominate}, $\bm{\Sigma}_{\infty}(\bm{p}_0)$ has a structure given in Equation \eqref{def:specialcov}, and assuming $\bm{p}_0$ is known,  
we estimate  $v(g_{0.4})$ in Equation  \eqref{def:specialcov} 
by the first entry of $\hat{\bm{\Sigma}}^{B'}_{n'}(\bm{p}_0)$ in Equation  \eqref{def:boot_est} with $B' = 4\times 10^5$  and $n'=10^4$. Further, we estimate $\psi$ and ${G}$ in Equation \eqref{def:R_size} by $\hat{\psi} = n^{-1} \sum_{\bm{z}\in\mathcal{Z}}n_{\bm{z}}(\hat{V}_{\bm{z}1}+\hat{V}_{\bm{z}0})/2$ and $\hat{\bm{G}}$, defined in Equation \eqref{def:B_adj}, with a sample size of $10^4$ and $4\times 10^5$ Monte Carlo repetitions. Based on these estimates, i.e., $\bm{\Sigma}_{n'}^{B'}(\bm{p}_0), \hat{\psi}, \hat{\bm{G}}$, we calculate the theoretical size by \eqref{def:R_size}, which is   $5.29\%$ and $5.32\%$ for the Exponential and Gamma model respectively.

Further, we estimate the actual rejection probability under $H_0$ for $n \in\{200,500, 2\times10^3, 5\times10^3, 10^4\}$ at the $5\%$ nominal level with $4\times 10^5$ repetitions. In Table \ref{tab3}, we report the results, together with the standard deviation for these estimates. 
For a moderate sample size $n$, the unadjusted robust test has a $1\%$ absolute size inflation when one of the stratification factors used in randomization is not included in the stratified test. As $n$ increases, the empirical sizes approach the theoretical limits, which are larger than the nominal level, 
though the magnitude of inflation is smaller.

\begin{table}[bt]
\caption{
Theoretical sizes for the Exponential model and Gamma model are respectively $5.29\%$ and $5.32\%$. For all estimates, the standard deviation is around $0.04\%$.}
\label{tab3}
\centering
\begin{tabular}{cccccccccc}
\headrow
    &$n=200$ && $n=500$ & &$n=2,000$  &  
    & $n=5,000$ & 
    & $n=10,000$ 
    \\
Exponential &$6.25$ && $5.70$&&$5.43$ &&$5.38$&&$5.25$ \\  
Gamma&$6.21$&&$5.74$&& $5.47$&& $5.31$  &&$5.37$ \\
\hline
\end{tabular}
\end{table}



\section{Discussions}
In this work, we focus on the minimization method and propose a consistent Monte Carlo estimator for the limiting covariance matrix of within-stratum imbalances. Further, we conduct simulation studies to evaluate the performance when the estimator is used to adjust robust  tests in survival analysis.

There are several directions we plan to explore in the future. First, in this work, we consider balanced allocation under the minimization method, where the desired ratio for the treatment and control groups  is $1:1$ for each stratification factor; it is of practical importance to study minimization methods under an unbalanced allocation. Second, we assume there are only two treatment groups. A theory for multiple treatment groups was also developed in \cite{hu}.  It would be interesting  to develop consistent estimators and more importantly adjusted robust tests under this setup. Finally, in this work, we assume the number of strata $m$ is fixed when $n$ tends to infinity in the asymptotic analysis. Since the minimization method is commonly applied when $n$ is not much larger than $m$, it is important to consider asymptotics in which $m$ also diverges with $n$.

\section*{Acknowledgements}
The authors thank the associate editor and the two referees for careful reading of the manuscript and for constructive comments which led to significant improvement of the paper. This research was supported by the Natural Sciences and Engineering Research Council of Canada (NSERC). This research was also supported, in part, by Compute Canada (\texttt{www.computecanada.ca}).


\printendnotes
\bibliography{sample}



\begin{appendix}

\section{Proofs related to the minimization method} \label{app}

\subsection{Previous results} \label{app:previous results}
First, we recall some  results from \cite{hu} that are needed in the proof.

Recall, for $i \in [n]$, $k\in [K]$ and $\bm{z}^{k} \in \mathcal{Z}^{k}$, the definition of $\bm{M}_{i}(k,\bm{z}^k)$ in Equation \eqref{marginalimb}, and  the definition of  within-stratum imbalances $\bm{S}_n$ after the $n$-th assignment in Equation \eqref{stratum imbalance}. 
Further, define 
$$\bm{\Lambda}_i = (\bm{\Lambda}_i(\bm{z}):\bm{z}\in \mathcal{Z} ), \quad \text{ where } \bm{\Lambda}_i(\bm{z}) =\sum_{k=1}^{K}\bm{\omega}^{k}\bm{M}_ i\left(k;\bm{z}^k\right).$$
Recall the definition of the potential imbalances $\text{Imb}_i^{(1)}$ and $\text{Imb}_i^{(0)}$ in Equation \eqref{Imbalance measure}. Since $(x+1)^2-(x-1)^2=4x$, we have that
$$\text{Imb}_i^{(1)}-\text{Imb}_i^{(0)} = 4 \sum_{k=1}^{K} \bm{\omega}^{k}\left\{\bm{M}_{i-1}\left(k,\bm{Z}_i^k\right)\right\} = 4\bm{\Lambda}_{i-1}(\bm{Z}_i).$$ 

Next, we introduce additional notations. Let $\mathcal{S}$ be a countable state space. For any function $f : \mathcal{S} \to \mathbb{R}$, probability mass function (pmf) $\pi$ on $\mathcal{S}$, and transition probabilities $\bm{P} = (\bm{P}_{x,y}:x,y \in \mathcal{S})$, denote by $\pi(f) = \sum_{x \in \mathcal{S}} \pi(x) f(x)$ and $\bm{P} f (x) = \sum_{y\in \mathcal{S}} \bm{P}_{x,y}f(y)$ whenever the summation is well defined.

\begin{mylem} \label{lemma:relevant properties1}
   Consider the minimization method in Subsection \ref{subsec:designnotationsandprocedure}. 
   \begin{enumerate}[label=(\roman*)]
        \item The process $(\bm{\Lambda}_n)_{n\geq 1}$ is an irreducible positive recurrent Markov Chain on a countable state space $\mathcal{S} \subset \mathbb{R}^{m}$,
        where $0 \in \mathcal{S}$, and if $\bm{\lambda}\in \mathcal{S}$, then $-\bm{\lambda}\in \mathcal{S}$. Further, $(\bm{\Lambda}_n)_{n\geq 1}$ and $(-\bm{\Lambda}_n)_{n\geq 1}$ have the same distribution.
        \item  For each $\bm{z}\in \mathcal{Z}, r>0$, $\sup_{n\geq 1} \mathbb{E}|n^{-1/2}\bm{S}_n(\bm{z})|^r<\infty$.
    \end{enumerate}
\end{mylem}

\begin{proof}
    For claims (i) and (ii), see Proposition 3.1(iii) and Theorem 3.2(iv) in \cite{hu} respectively.
\end{proof}

Let $\mathcal{F}_0$ be the trivial $\sigma$-algebra, and for $i \geq 1$, $\mathcal{F}_i =\sigma(\mathcal{F}_{i-1},I_{i-1},\bm{Z}_i)$ be the $\sigma$-algebra generated by
$\mathcal{F}_{i-1}$ and $I_{i-1},\bm{Z}_i$, i.e., $\mathcal{F}_i$ is
the information set for the $i$-th assignment. Recall Step $3$ of the minimization method in Subsection \ref{subsec:designnotationsandprocedure}, by definition, 
$g(\text{Imb}_i^{(1)}-\text{Imb}_i^{(0)}) = P(I_i=1\mid \mathcal{F}_{i-1})$. For each $\bm{z} \in \mathcal{Z}$ and $\bm{\lambda} \in \mathcal{S}$, for simplicity define 
\begin{align}
\label{def:g_bar}    
\bar{g}_{\bm{z}}(\bm{\lambda}) = g\{4\bm{\lambda}(\bm{z})\} - g\{-4\bm{\lambda}(\bm{z})\}.
\end{align}
Denote by $\bm{P}_{\bm{\Lambda}}$ the transition probability matrix of $(\bm{\Lambda}_n)_{n\geq 1}$ with the ($\bm{\lambda},\bm{\lambda}'$)-th component as $\bm{P}_{\bm{\Lambda}}(\bm{\lambda},\bm{\lambda}') = \mathbb{P}(\bm{\Lambda}_n= \bm{\lambda}'\mid \bm{\Lambda}_{n-1}=\bm{\lambda})$.

\begin{mylem}  \label{lemma:relevant properties2}
Consider the minimization method in Subsection \ref{subsec:designnotationsandprocedure}. 
\begin{enumerate}[label=(\roman*)]
    \item 
    Consider the Poisson equation 
    $$h -\bm{P}_{\bm{\Lambda}} h =\bar{g}_{\bm{z}},$$ 
    where $h: \mathcal{S} \to \mathbb{R}$ is to be solved. It
    admits a unique solution $\hat{g}_{\bm{z}}: \mathcal{S} \to \mathbb{R}$, for which $\hat{g}_{\bm{z}}(\bm{0}) = 0$.
    \item For each $\bm{z} \in \mathcal{Z}, r>0$, $\sup_n \mathbb{E}|\hat{g}_{\bm{z}}(\bm{\Lambda}_n)|^r<\infty$. 
    \item  For each $\bm{z} \in \mathcal{Z}$ and $i \geq 1$, the sequence 
    \begin{equation}
         \Delta M_{i,z}= \bm{S}_i(\bm{z})-\bm{S}_{i-1}(\bm{z})+\bm{p}_0(\bm{z})\{\hat{g}_{\bm{z}}(\bm{\Lambda}_i)-\hat{g}_{\bm{z}}(\bm{\Lambda}_{i-1})\}
         \label{deltaMiz}
    \end{equation}
    is a martingale difference sequence with respect to $\left\{\mathcal{F}_i:i\geq 1\right\}$; that is, $\mathbb{E}(\Delta M_{i,\bm{z}}\mid \mathcal{F}_{i-1}) = 0$.
\end{enumerate}
\end{mylem}

\begin{proof}
For the existence of the Poisson equation solution in claim (i), see \citet[Equation (6.11)]{hu}. Since $\hat{g}$ is $pi$-integrable and $\hat{g}_{\bm{z}}(\bm{0})=0$, by \citet[Proposition 7.1]{asmussen}, the uniqueness is guaranteed.
For claim (ii) and (iii), see  \citet[Equation (6.5),(6.11)]{hu},
and \citet[Equation (6.14)]{hu} respectively. 
\end{proof}

\subsection{Auxiliary lemmas} \label{app:lemmas}

In this subsection, we establish several results.
We start with a structural property of Poisson equation solutions associated with a general symmetric Markov chain.

\begin{mylem}\label{lemma:odd function}
    Let $\{X_n: n \geq 1\}$ be an irreducible positive recurrent Markov Chain on a countable state space $\mathcal{S}_X\subset \mathbb{R}^m$, with transition probabilities $\bm{P} = \{\bm{P}_{x,x'}: x, x' \in \mathcal{S}_X\}$. Denote by $\bm{\pi}$ the unique  invariant distribution. Assume that $\bm{0}\in \mathcal{S}_X$ and if $x\in \mathcal{S}_X$, then $-x\in \mathcal{S}_X$. Further assume $\bm{P}_{x,x'} = \bm{P}_{-x,-x'}$ for any $x,x'\in \mathcal{S}_X$. Let $f:\mathcal{S}_X \to \mathbb{R}$ be an $\bm{\pi}$ integrable function such that $\bm{\pi}(f)=0$. 
    
    \begin{itemize}
        \item[(i)] The Poisson equation $h-\mathbb{P} h = f$ admits a unique solution $h^*:\mathcal{S} \to \mathbb{R}$ with $h^*(\bm{0}) = 0$.
        \item[(ii)] If moreover, $f(x)=-f(-x)$, for all $x\in \mathcal{S}_X$, then  $h^*(x) = -h^*(-x)$.
    \end{itemize}
\end{mylem}

\begin{proof}
    Denote $\mathbb{P}_{x}$ as the probability measure, under which $\{X_n: n \geq 0\}$ is a Markov chain with initial state $x$, and $\mathbb{E}_{x}$ the corresponding expectation.
    By \citet[Proposition 7.1]{asmussen}, if we let $\tau = \inf \left\{n\geq 1:X_n=0\right\}$,
    then the unique  solution with the property of $h^*(\bm{0}) = 0$ is of the form 
    $$h^*(x) = \mathbb{E}_{x}\left\{\sum_{i=0}^{\tau-1}f(X_i)\right\},\quad \text{ for any } x \in \mathcal{S}_X,
    $$
which is the claim in (i).

From the proof of \citet[Proposition 7.1]{asmussen},
for any $x \in \mathcal{S}_X$, $\mathbb{E}_{x}\left\{\sum_{i=0}^{\tau-1} |f(X_i)|\right\} < \infty$. Then by dominated convergence theorem and since $f$ is an odd function,

    \begin{align}
    \begin{split}
                    h^*(x) &=\sum_{n=1}^{\infty} \sum_{x_0\in \mathcal{S}_X}\ldots \sum_{x_n\in  \mathcal{S}_X} \mathbb{P}_{x}(\tau=n,X_0=x_0,\ldots,X_n=x_n)\left\{\sum_{i=0}^{n-1}f(x_i)\right\},
        \\ 
            h^*(-x) &=\sum_{n=1}^{\infty} \sum_{x_0\in  \mathcal{S}}\ldots \sum_{x_n\in  \mathcal{S}}\mathbb{P}_{-x}(\tau=n,X_0=-x_0,\ldots,X_n=-x_n)\left\{-\sum_{i=0}^{n-1}f(x_i)\right\}.
                \end{split}
\label{eq:poisson}
    \end{align}

Fix some arbitrary $n \geq 1$, and $x_0,\ldots,x_n \in \mathcal{S}_X$. If the following holds
\begin{align}
    x_j \neq 0 \;\;\text{ for } 1 \leq j \leq n-1, \quad \text{ and }\;\;x_n = 0, \label{tau_n_event}
\end{align}
then we have
\begin{align*}
  &\{\tau=n,X_0=x_0,\ldots,X_n=x_n\} =   \{  X_0=x_0,\ldots,X_n=x_n\},\\
  &\{\tau=n,X_0=-x_0,\ldots,X_n=-x_n\}=\{X_0=-x_0,\ldots,X_n=-x_n\}.
\end{align*}
Since $\bm{P}_{x,x'}=\bm{P}_{-x,-x'}$ for any $x,x'\in \mathcal{S}_X$, if condition \eqref{tau_n_event} holds, then
$$\mathbb{P}_{x}(\tau=n,X=x_0,\ldots,X_n=x_n)=\mathbb{P}_{-x}(\tau=n, X=-x_0,\ldots,X_n=-x_n).$$
On the other, if condition \eqref{tau_n_event} does not hold, then 
$$\{\tau=n,X=x_0,\ldots,X_n=x_n\} = \emptyset = \{\tau=n, X=-x_0,\ldots,X_n=-x_n\}.$$

Combining the two cases,  we complete the proof  due to Eqs.\eqref{eq:poisson}.
\end{proof}

Recall the Markov chain $(\bm{\Lambda}_n)_{n\geq 1}$ with the state space $\mathcal{S}$ and the transition probabilities  $\bm{P}_{\bm{\Lambda}}$ in Lemma \ref{lemma:relevant properties1}(i), and the 
    unique Poisson equation solution $\hat{g}_{\bm{z}}(\cdot)$ for each $\bm{z} \in \mathcal{Z}$ in Lemma \ref{lemma:relevant properties2}(i).

\begin{mycor}\label{cor:oddfunction}
  For each $\bm{z} \in \mathcal{Z}$ and $\bm{\lambda} \in \mathcal{S}$, 
    $\hat{g}_{\bm{z}}(\bm{\lambda}) = - \hat{g}_{\bm{z}}(-\bm{\lambda})$.
\end{mycor}

\begin{proof}
Fix some $\bm{z} \in \mathcal{Z}$.
    By definition in Equation \eqref{def:g_bar},  $\bar{g}_{\bm{z}}(-\bm{\lambda}) = -\bar{g}_{\bm{z}}(\bm{\lambda})$  for each $ \bm{\lambda}\in \mathcal{S}$.   Then the proof is complete due to Lemma \ref{lemma:odd function}.
\end{proof}

\subsection{Local asymptotic normality} \label{app:lam}
For any probability mass function (pmf) $\bm{p} \in \bm{\Delta}_m=\{\bm{p}\in \mathbb{R}^m: \sum_{\bm{z} \in \mathcal{Z}}\bm{p}(\bm{z})=1,\; \bm{p}(\bm{z}) \geq 0 \text{ for each } \bm{z} \in \mathcal{Z}\}$, recall the definition in Equation \eqref{ell} for the log-likelihood ratio, $\ell_n(\bm{p};\bm{p}_0)$, of $\bm{p}$ against $\bm{p}_0$ for the sequence $\bm{Z}_1,\ldots, \bm{Z}_n$. We use the notation $o_{\bm{p}}(1)$ to indicate a sequence of random variables that converges to $0$ in probability as $n \to \infty$, when the pmf of $\bm{Z}$ is $\bm{p} \in \bm{\Delta}_m$.

\begin{proof}[Proof of Lemma \ref{lemma:LAN}]
For each pmf $\bm{p} \in \bm{\Delta}_m$, the degree of freedom is $m-1$; thus, we fix some stratum $\bm{z}^* \in \mathcal{Z}$, and denote by 
$\mathcal{Z}^{-*} = \mathcal{Z}\setminus \{\bm{z}^*\}$. The pmf $\bm{p} \in \bm{\Delta}_m$ of the generic random vector $\bm{Z}$  can be written as follows:
    $$\bm{p}(\bm{Z})=\left\{\prod_{\bm{z}\in \mathcal{Z}^{-*} } \bm{p}(\bm{z})^{\mathbb{1}(\bm{Z}=\bm{z})}\right\}\left\{1-\sum_{\bm{z}\in \mathcal{Z}^{-*} }\bm{p}(\bm{z})\right\}^{\mathbb{1}(\bm{Z} = \bm{z}^*)}.$$  By \citet[Lemma 7.6]{van} , the probabilistic model for $\bm{Z}$, parameterized by $\bm{p} \in \bm{\Delta}_m$, is differentiable at $\bm{p}_0$ in quadratic mean with derivative 
    $$
    \left\{ {\mathbb{1}(\bm{Z} = \bm{z})}/{\bm{p}_0(\bm{z})} - {\mathbb{1}(\bm{Z} = \bm{z}^*)}/{\bm{p}_0(\bm{z}^*)}
    \;\;:\;\;\bm{z} \in \mathcal{Z}^{-*}  
    \right\},
    $$
    and the information matrix is
    $$
     \left\{ {\mathbb{1}(\bm{z} = \bm{z}')}/{\bm{p}_0(\bm{z})} + 1/{\bm{p}_0(\bm{z}^*)}
    \;\;:\;\;\bm{z}, \bm{z}'
    \in \mathcal{Z}^{-*}  
    \right\}.
    $$
    Thus by \citet[Theorem 7.2]{van}, we have
    \begin{align*}          
        \ell_n(\bm{p}_n;\bm{p}_0) = &n^{-1/2} \sum_{i=1}^{n}\sum_{\bm{z} \in \mathcal{Z}^{-*}}  \bm{\delta}(\bm{z}) \left\{{\mathbb{1}(\bm{Z}_i = \bm{z})}/{\bm{p}_0(\bm{z})} - {\mathbb{1}(\bm{Z}_i = \bm{z}^*)}/{\bm{p}_0(\bm{z}^*)} \right\}
        \\&- 2^{-1} \sum_{\bm{z}, \bm{z}'\in \mathcal{Z}^{-*} } \bm{\delta}(\bm{z})\bm{\delta}(\bm{z}')\left\{ {\mathbb{1}(\bm{z} = \bm{z}')}/{\bm{p}_0(\bm{z})} + {1}/{\bm{p}_0(\bm{z}^*)}\right\}+o_{\bm{p}_0}(1).
    \end{align*}
Then the proof is complete since $\bm{\delta} = \lim_{n \to \infty} n^{1/2}(\bm{p}_n-\bm{p}_0)$, and as a result  $\bm{\delta}(\bm{z}^*) = -\sum_{\bm{z}\in \mathcal{Z}^{-*} }\bm{\delta}(\bm{z})$.
\end{proof}

Consider the minimization method in Subsection \ref{subsec:designnotationsandprocedure}. As discussed there, the  distribution of the \textit{joint} sequence $\{\bm{Z}_i,I_i: i \in [n]\}$ is determined by the probability mass function (pmf) $\bm{p} \in \bm{\Delta}_m$ of the generic random vector $\bm{Z}$. Thus, when $\bm{Z}$ has pmf $\bm{p}$, we denote the probability and expectation as $\mathbb{P}_{\bm{p}}$ and $\mathbb{E}_{\bm{p}}$ respectively. The next lemma shows that $\exp\{\ell_n(\bm{p}; \bm{p}_0)\}$ in \eqref{ell} is, in fact, the likelihood ratio of $\mathbb{P}_{\bm{p}}$ against $\mathbb{P}_{\bm{p}_0}$  for this \textit{joint} sequence.

\begin{mylem}[change of measure] \label{lemma:changeofmeasure}
Consider the minimization method in Subsection \ref{subsec:designnotationsandprocedure}, and let $\bm{p}\in \bm{\Delta}_m$ be any probability mass function.  For any  non-negative function $\zeta:\mathcal{Z}^n\times \{0,1\}^n\to[0,\infty)$,
    \begin{equation*}
       \mathbb{E}_{\bm{p}}\left\{\zeta(\bm{Z}_1,\ldots,\bm{Z}_n,I_1,\ldots,I_n)\right\} = \mathbb{E}_{\bm{p}_0}\left[\exp\{\ell_n(\bm{p};\bm{p}_0)\} \zeta(\bm{Z}_1,\ldots,\bm{Z}_n,I_1,\ldots,I_n)\right].
    \end{equation*}
\end{mylem}

\begin{proof}
    Under the minimization method in Subsection \ref{subsec:designnotationsandprocedure},
    the assignments $(I_1,\ldots,I_n)$ can be realized as follows:  for each $i \in [n]$, let $I_i = \mathbb{1}[U_i<g\{4\Lambda_{i-1}(\bm{Z}_i)\}]$,  where $\{U_i: i \in [n]\}$ are independent uniform random variables on $(0,1)$ for each $i \in [n]$, and  are independent from $\{\bm{Z}_i: i \in [n]\}$. In this equivalent construction, $I_i$ is a function of $\{\bm{Z}_j,U_j: j \in [i]\}$ for $i \in [n]$. Thus it suffices to show that for any non-negative, measurable  function $\tilde{\zeta}:\mathcal{Z}^{n}\times (0,1)^n \to [0,\infty)$,  
    $$E_{\bm{p}}\{\tilde{\zeta}(\bm{Z_1,\ldots,Z_n},U_1,\ldots,U_n)\} =E_{\bm{p}_0}[\exp\{\ell_n(\bm{p};\bm{p}_0)\}\tilde{\zeta}(\bm{Z_1,\ldots,Z_n},U_1,\ldots,U_n)].$$

Due to the definition of $\ell_n(\bm{p};\bm{p}_0)$ in Equation \eqref{ell}, we have 
\begin{align*}
        &\mathbb{E}_{\bm{p}}\left\{\tilde{\zeta}(\bm{Z}_1,\ldots,\bm{Z}_n,U_1,\ldots,U_n)\right\}
        \\
        &= \int_{0}^{1} \cdots \int_{0}^{1} \sum_{\bm{z}_1 \in 
        \mathcal{Z}}\cdots \sum_{\bm{z}_n \in \mathcal{Z}} \left\{\tilde{\zeta}(\bm{z}_1,\ldots,\bm{z}_n,u_1,\ldots,u_n)\prod_{i=1}^{n} \bm{p}({\bm{z}_i}) \right\} du_1\cdots du_n
        \\
        &= \int_{0}^{1} \cdots \int_{0}^{1}  \sum_{\bm{z}_1 \in 
        \mathcal{Z}}\cdots \sum_{\bm{z}_n \in \mathcal{Z}}\left[\tilde{\zeta}(\bm{z}_1,\ldots,\bm{z}_n,u_1,\ldots,u_n)\left\{\prod_{i=1}^{n} \frac{ \bm{p}(\bm{z}_i)}{\bm{p}_0(\bm{z}_i)} \right\}\prod_{i=1}^{n} \bm{p}_0(\bm{z}_i) \right] du_1\cdots du_n
        \\
        &=\mathbb{E}_{\bm{p}_0}\left[\exp\left\{\ell_n(\bm{p};\bm{p}_0)\right\}\tilde{\zeta}(\bm{Z_1,\ldots,Z_n},U_1,\ldots,U_n)\right],
    \end{align*}
    which completes the proof.
\end{proof}

\subsection{Proofs of Lemma \ref{lemma:joint}} \label{app:proof_joint_local}

Before proving Lemma \ref{lemma:joint}, we state two lemmas.
Recall the Markov chain $(\bm{\Lambda}_n)_{n\geq 1}$ with the state space $\mathcal{S}$, and transition probabilities $\bm{P}_{\bm{\Lambda}}$ in Lemma \ref{lemma:relevant properties1}(i). Also recall for each $\bm{z} \in \mathcal{Z}$,  $\bar{g}_{\bm{z}}$ in \eqref{def:g_bar} and 
the unique Poisson equation solution $\hat{g}_{\bm{z}}$  in Lemma \ref{lemma:relevant properties2}(i). Further, for $i \geq 1$ and $\bm{z} \in \mathcal{Z}$, recall the sequence $\Delta M_{i,z}$  in Equation \eqref{deltaMiz} in Lemma \ref{lemma:relevant properties2}(iii).

For each $\bm{z} \in \mathcal{Z}$, define $\bm{B_z} = \bm{\Lambda}_i-\bm{\Lambda}_{i-1}$ if $S_i(\bm{z})-S_{i-1}(\bm{z})=1$; note that $\bm{B}_{\bm{z}}$ does not depend on $\bm{\Lambda}_{i-1}$ or $i$. For each $\bm{z},\bm{z'} \in \mathcal{Z}$, define two functions $H_{\bm{z},\bm{z}'}^{\pm}: \mathcal{S} \to \mathbb{R}$ as follows: for  $\bm{\lambda} \in \mathcal{S}$
\begin{align}
    \label{def:H_zzp}
 H_{\bm{z},\bm{z}'}^{\pm}(\bm{\lambda}) =     \hat{g}_{\bm{z}}(\bm{\lambda}+\bm{B}_{\bm{z}'})g\{4\bm{\lambda}(\bm{z}')\}\pm\hat{g}_{\bm{z}}(\bm{\lambda}-\bm{B}_{\bm{z}'})g\{-4\bm{\lambda}(\bm{z}')\}.
\end{align}
By Corollary \ref{cor:oddfunction}, $\hat{g}_{\bm{z}}(\cdot)$ is an odd function, and as a result, $ H_{\bm{z},\bm{z}'}^{+}(\cdot)$ is an odd function, while 
$H_{\bm{z},\bm{z}'}^{-}(\cdot)$ is an even function.
Note further that both $H_{\bm{z},\bm{z}'}^{\pm}(\cdot)$ are \textit{not} symmetric in $(\bm{z},\bm{z}')$.

\begin{mylem} \label{lemma:preface2}
Let $\bm{\delta} \in \mathbb{R}^m$ be any vector such that $\sum_{\bm{z} \in \mathcal{Z}} \bm{\delta}(\bm{z}) = 0$. For any $\bm{z} \in \mathcal{Z}$,  as $n \to \infty$,
    $$n^{-1} \sum_{i=1}^{n} \mathbb{E}\left\{\Delta M_{i,\bm{z}} {\bm{\delta}(\bm{Z}_i})/{\bm{p}_0(\bm{Z}_i)} \mid \mathcal{F}_{i-1}\right\}\to 0 \quad \text{in probability}.$$
\end{mylem}

\begin{mylem} \label{lemma:preface1}
Let $\bm{z},\bm{z'} \in \mathcal{Z}$ be fixed. As $n \to \infty$, in probability,
 $$ n^{-1}\sum_{i=1}^{n}\mathbb{E}\left(\Delta M_{i,\bm{z}} \Delta  M_{i,\bm{z}'} \mid  \mathcal{F}_{i-1}\right)\;\to\;
  \mathbb{1}(\bm{z}=\bm{z}')\bm{p}_0(\bm{z}) +
 \bm{p}_0(\bm{z})\bm{p}_0(\bm{z}') \pi (H_{\bm{z,z'}}^{-}+H_{\bm{z',z}}^{-}).$$
\end{mylem}

The detailed proofs for these two lemmas above are relegated after the proof of Lemma \ref{lemma:joint}.

\begin{proof}[Proof of Lemma \ref{lemma:joint}]
From the definition of $\Delta M_{i,\bm{z}}$ and by the telescoping sum, for each $\bm{z} \in \mathcal{Z}$, 
$$n^{-1/2}\bm{S}_n(\bm{z}) = n^{-1/2}\sum_{i=1}^{n}\Delta M_{i,\bm{z}}-n^{-1/2}\bm{p}_0(\bm{z})\{\hat{g}_{\bm{z}}(\bm{\Lambda}_n)-\hat{g}_{\bm{z}}(\bm{\Lambda}_0)\}.
$$
Since $\bm{\Lambda}_0 = \bm{0}$ and
by Lemma \ref{lemma:relevant properties2}(ii), we have $n^{-1/2}\left(\hat{g}_{\bm{z}}(\bm{\Lambda}_n)-\hat{g}_{\bm{z}}(\bm{\Lambda}_0)\right) = o_{\bm{p}_0}(1)$. Then in view of Lemma \ref{lemma:LAN}, it suffices to show
   \begin{equation*}
     n^{-1/2}\sum_{i=1}^{n}\begin{Bmatrix}   \Delta M_{i,\bm{z}}\\  {\bm{\delta}(\bm{Z}_i)}/{\bm{p}_0(\bm{Z}_i)}
      \end{Bmatrix}
      \;\; \Rightarrow_{\bm{p}_0}\;\; 
      \mathcal{N}_{m+1}\left(
      \begin{pmatrix}\bm{0}_m\\
        0
        \end{pmatrix}
      , \begin{pmatrix} \bm{\Sigma}_\infty(\bm{p}_0) &\bm{0}_m \\ \bm{0}^\text{T}_m & \tau^2\end{pmatrix} \right),
    \end{equation*}
where for $\bm{z},\bm{z}' \in \mathcal{Z}$, the $(\bm{z},\bm{z}')$-th entry of the matrix $\bm{\Sigma}_{\infty}(\bm{p}_0)$ is 
\begin{equation}
\label{def:app_sigma_infty}
\eta_{\bm{z},\bm{z}'}  =  \mathbb{1}(\bm{z}=\bm{z}')\bm{p}_0(\bm{z}) +
 \bm{p}_0(\bm{z})\bm{p}_0(\bm{z}') \pi (H_{\bm{z,z'}}^{-}+H_{\bm{z',z}}^{-}),
\end{equation}
and recall $\tau^2$ in Lemma \ref{lemma:joint}. 
By the Cramer-Wold device, it is equivalent to show that  for any  vector 
${\bm{a}} = \{\bm{a}_{\bm{z}}: \bm{z} \in \mathcal{Z}\}$ and scalar $b_{\ell}$,
\begin{align}
    \begin{split}
        &n^{-1/2}\sum_{i=1}^{n} \Delta \widetilde{M}_{i}(\bm{a},b_{\ell})
\;\; \Rightarrow_{\bm{p}_0}\;\; 
\mathcal{N}_1\left(0, {\bm{a}}^\text{T} \bm{\Sigma}_\infty(\bm{p}_0) {\bm{a}} + b_{\ell}^2 \tau^2 \right),
    \end{split} \label{aux:cw}
\end{align}
where $\Delta \widetilde{M}_{i}(\bm{a},b_{\ell}) = \left(\sum_{\bm{z} \in \mathcal{Z}} \bm{a}_{\bm{z}}  \Delta M_{i,\bm{z}} \right) + b_{\ell}  {\bm{\delta}(\bm{Z}_i)}/{\bm{p}_0(\bm{Z}_i)}$ for $i \in [n]$. 
Next, we fix some arbitrary ${\bm{a}} \in \mathbb{R}^{m}$ and $b_{\ell} \in \mathbb{R}$, write $\Delta \widetilde{M}_{i}$ for $\Delta \widetilde{M}_{i}(\bm{a},b_{\ell})$ and prove the above claim.

Note that for each $i \in [n]$,
\begin{align*}
    \mathbb{E}\{{\bm{\delta}(\bm{Z}_i)}/{\bm{p}_0(\bm{Z}_i)}\mid \mathcal{F}_{i-1}\}
    = \sum_{\bm{z} \in \mathcal{Z}} \bm{p}_0(\bm{z}) \times \bm{\delta}(\bm{z})/\bm{p}_0(\bm{z})  = 0,
\end{align*}
where the last equality is because $\bm{\delta} = \lim_{n} n^{1/2}(\bm{p}_n -\bm{p}_0)$. Thus by  Lemma \ref{lemma:relevant properties2}(iii), $\{\Delta \widetilde{M}_{i} : i \in [n]\}$ is a sequence of $\{\mathcal{F}_i: 0 \leq i \leq n\}$-martingale differences.

By the martingale central limit theorem \cite[Corollary 3.1.]{hall}, to show \eqref{aux:cw}, it suffices to verify the following Lindeberg's condition and the convergence of predictive variance: 
\begin{align}
        &\sum_{i=1}^{n} \mathbb{E}\left\{\left(n^{-1/2}\Delta \widetilde{M}_{i}\right)^2 \mathbb{1}\left(|n^{-1/2}\Delta \widetilde{M}_{i} \mid>\varepsilon\right)|\mathcal{F}_{i-1}\right\}\to 0 \quad \text{in probability}, \text{ for all} \; \varepsilon>0,
        \label{conLind}\\
        &\sum_{i=1}^{n} \mathbb{E}\left\{ \left(n^{-1/2}  \Delta\widetilde{M}_{i}\right)^2\mid\mathcal{F}_{i-1}\right\}\to \sum_{\bm{z},\bm{z}' \in \mathcal{Z}}
    \bm{a}_{\bm{z}}  \bm{a}_{\bm{z}'} \eta_{\bm{z},\bm{z}'}   +b_{\ell}^2 \tau^2 \quad \text{in probability}. \label{conVar}
\end{align}

We first consider condition \eqref{conLind}. 
For each $i \in [n]$, $\mathbb{E} \{\bm{\delta}(\bm{Z}_i)/\bm{p}_0(\bm{Z}_i)\}^4  = \sum_{\bm{z} \in \mathcal{Z}} \bm{\delta}^4(\bm{z})/\bm{p}_0^3(\bm{z})$, which implies that 
$n^{-2} \sum_{i=1}^{n} \mathbb{E}[ \{\bm{\delta}(\bm{Z}_i)/\bm{p}_0(\bm{Z}_i)\}^4 ] \to  0$. Further, 
we have for each $\bm{z} \in \mathcal{Z}$,
\begin{align*}
    0 \leq n^{-2} \sum_{i=1}^{n}\mathbb{E}\left( \Delta M_{i,\bm{z}}^4 \right)
    \leq n^{-1} \max_{i \in [n]}\mathbb{E}\left( \Delta M_{i,\bm{z}}^4 \right) 
    \; \to \; 0, 
\end{align*}
where the last convergence is due to Lemma \ref{lemma:relevant properties2}(ii) 
and the definition of $\Delta M_{i,\bm{z}}$ in \eqref{deltaMiz} (note that $|\bm{S}_i(\bm{z})-\bm{S}_{i-1}(\bm{z})| \leq 1$). Thus, we have $\sum_{i=1}^{n} \mathbb{E}\left\{(n^{-1/2}\Delta \widetilde{M}_{i})^4 \right\}$ converges to zero, which, due to 
Markov inequality, verifies condition \eqref{conLind}.

Next, we verify  condition \eqref{conVar}. Note the following decomposition
    $$
     \mathbb{E}\left(n^{-1} \Delta  \widetilde{M}_{i}^2\mid\mathcal{F}_{i-1}\right) = 
     \sum_{\bm{z},\bm{z}' \in \mathcal{Z}}
    \bm{a}_{\bm{z}}\bm{a}_{\bm{z}'} \bm{I}^{(i)}(\bm{z},\bm{z}') +b_{\ell}^2 \bm{II}^{(i)} + 2\sum_{\bm{z}\in \mathcal{Z}} 
     \bm{a}_{\bm{z}} b_{\ell}
     \bm{III}^{(i)}(\bm{z}),
     $$
     where we define
     \begin{align*}
     &\bm{I}^{(i)}(\bm{z},\bm{z}') = n^{-1} \mathbb{E}\left(\Delta M_{i,\bm{z}}\Delta M_{i,\bm{z}'} \mid  \mathcal{F}_{i-1}\right),  \quad 
     \bm{II}^{(i)} = n^{-1} \mathbb{E}[\left\{
     \bm{\delta}(\bm{Z}_i)/\bm{p}_0(\bm{Z}_i)
     \right\}^2 \mid \mathcal{F}_{i-1}], \\
     &  \bm{III}^{(i)}(\bm{z}) = n^{-1}\mathbb{E}\left\{\Delta M_{i,\bm{z}} \bm{\delta}(\bm{Z}_i)/\bm{p}_0(\bm{Z}_i) \mid  \mathcal{F}_{i-1}\right\}.
     \end{align*}
By Lemma \ref{lemma:preface1} and \ref{lemma:preface2}, for each $\bm{z},\bm{z}' \in \mathcal{Z}$, we have
\begin{align*}
    \sum_{i=1}^{n} \bm{I}^{(i)}(\bm{z},\bm{z}') \to  \eta_{\bm{z},\bm{z}'} \;\text{ in probability} \quad \text{and}\quad \sum_{i=1}^{n}  \bm{III}^{(i)}(\bm{z}) \to 0 \;\text{ in probability}.
\end{align*}

Finally, for the second term,
    \begin{align*}
        \sum_{i=1}^{n}\bm{II}^{(i)} &= n^{-1} \sum_{i=1}^{n}  \sum_{\bm{z}\in \mathcal{Z}}\frac{\bm{\delta}^2(\bm{z})}{\bm{p}_0(\bm{z})} =  \tau^2,
    \end{align*}
    which verifies condition \eqref{conVar}. The proof is complete.
\end{proof}

We finally turn to the proofs of Lemma \ref{lemma:preface2} and \ref{lemma:preface1}. 

\begin{proof}[Proof of Lemma \ref{lemma:preface2}]
Fix some $\bm{z} \in \mathcal{Z}$. By definition,  for each $\bm{z}' \in \mathcal{Z}$, given $\mathcal{F}_{i-1}$,
the conditional probabilities for the events 
$\left\{\bm{Z}_i=\bm{z}', I_i=1\right\}$ and 
$\left\{\bm{Z}_i=\bm{z}', I_i=0\right\}$ are respectively, 
$\bm{p}_0(\bm{z}')g\{4\bm{\Lambda}_{i-1}(\bm{z'})\}$ and
$\bm{p}_0(\bm{z}')g\{-4\bm{\Lambda}_{i-1}(\bm{z'})\}$.
On  the event $\left\{\bm{Z}_i=\bm{z}', I_i=1\right\}$, 
    $$
        \Delta M_{i,\bm{z}}\frac{\bm{\delta}(\bm{Z}_i)}{\bm{p}_0(\bm{Z}_i)}
        =\frac{\bm{\delta}(\bm{z}')}{\bm{p}_0(\bm{z}')}[\mathbb{1}(\bm{z}=\bm{z}')+\bm{p}_0(\bm{z})\{\hat{g}_{\bm{z}}(\bm{\Lambda}_{i-1}+\bm{B}_{\bm{z}'})-\hat{g}_{\bm{z}}(\bm{\Lambda}_{i-1})\}],
    $$
and on the event $\left\{\bm{Z}_i=\bm{z}', I_i=0\right\}$,  
    $$  
        \Delta M_{i,\bm{z}}\frac{\bm{\delta}(\bm{Z}_i)}{\bm{p}_0(\bm{Z}_i)}
        =\frac{\bm{\delta}(\bm{z}')}{\bm{p}_0(\bm{z}')}[-\mathbb{1}(\bm{z}=\bm{z}')+\bm{p}_0(\bm{z})\{\hat{g}_{\bm{z}}(\bm{\Lambda}_{ i-1}-\bm{B}_{\bm{z}'})-\hat{g}_{\bm{z}}(\bm{\Lambda}_{i-1})\}].
    $$
Thus, since 
$g(x) + g(-x) = 1$ for $x \in \mathbb{R}$, we have
\begin{align*}
 &\mathbb{E}\left\{\Delta M_{i,\bm{z}} {\bm{\delta}(\bm{Z}_i})/{\bm{p}_0(\bm{Z}_i)} \mid \mathcal{F}_{i-1}\right\}\\
 = &\bm{\delta}(\bm{z}) \bar{g}_{\bm{z}}(\bm{\Lambda}_{i-1}) 
 + \sum_{\bm{z}' \in \mathcal{Z}} \bm{p}_0(\bm{z})\bm{\delta}(\bm{z}') H_{\bm{z},\bm{z}'}^{+}(\bm{\Lambda}_{i-1})
 - \sum_{\bm{z}' \in \mathcal{Z}}  \bm{p}_0(\bm{z})\bm{\delta}(\bm{z}')\hat{g}_{\bm{z}}(\bm{\Lambda}_{i-1}),
\end{align*}
where $H_{\bm{z},\bm{z}'}^{+}$ is defined in Equation \eqref{def:H_zzp}. 
By assumption, $\sum_{\bm{z}' \in \mathcal{Z}}  \bm{p}_0(\bm{z})\bm{\delta}(\bm{z}')\hat{g}_{\bm{z}}(\bm{\Lambda}_{i-1}) = 0$.

By  Lemma \ref{lemma:relevant properties1} and
\cite{norris1998markov},  
there exits a unique invariant distribution $\pi$ on $\mathcal{S}$ for the Markov chain $(\bm{\Lambda}_n)_{n\geq 1}$, and for any odd,  integrable function $f: \mathcal{S} \to \mathbb{R}$, $\pi f = 0$. In particular, both $\bar{g}_{\bm{z}}(\cdot)$ and $H_{\bm{z},\bm{z}'}^{+}(\cdot)$ are odd, and thus $\pi \bar{g}_{\bm{z}} = 0$ and $\pi H_{\bm{z},\bm{z}'}^{+} = 0$. Then, by the Ergodic Theorem \cite[Theorem 1.10.2]{norris1998markov}, as $n \to \infty$, for each $\bm{z}' \in \mathcal{Z}$,
\begin{align*}
     &n^{-1}\sum_{i=1}^{n}\bm{\delta}(\bm{z}) \bar{g}_{\bm{z}}(\bm{\Lambda}_{i-1})\to \bm{\delta}(\bm{z})\pi \bar{g}_{\bm{z}}=0 \quad\text{in probability},\\
     & n^{-1}\sum_{i=1}^{n} \bm{p}_0(\bm{z})\bm{\delta}(\bm{z}') H_{\bm{z},\bm{z}'}^{+}(\bm{\Lambda}_{i-1}) \to \bm{p}_0(\bm{z})\bm{\delta}(\bm{z}') \pi H_{\bm{z},\bm{z}'}^{+}=0 \quad\text{in probability},
\end{align*}
which completes the proof.
\end{proof}

\begin{proof}[Proof of Lemma \ref{lemma:preface1}]
Fix $\bm{z},\bm{z}' \in \mathcal{Z}$. Denote by $\Delta \bm{S}_i(\bm{z}) = \bm{S}_i(\bm{z})-\bm{S}_{i-1}(\bm{z})$. Note the following decomposition for $\Delta  M_{i,\bm{z}} \Delta  M_{i,\bm{z}'}$:
\begin{align*}
 & \Delta \bm{S}_i(\bm{z})\Delta \bm{S}_i(\bm{z}')   +  \bm{p}_0(\bm{z})\bm{p}_0(\bm{z}')  \{\hat{g}_{\bm{z}}(\bm{\Lambda}_i)-\hat{g}_{\bm{z}}(\bm{\Lambda}_{i-1})\} \{\hat{g}_{\bm{z}'}(\bm{\Lambda}_i)-\hat{g}_{\bm{z}'}(\bm{\Lambda}_{i-1})\}
 \\
 &+ \Delta \bm{S}_i(\bm{z}) \bm{p}_0(\bm{z}')\{\hat{g}_{\bm{z}'}(\bm{\Lambda}_i)-\hat{g}_{\bm{z}'}(\bm{\Lambda}_{i-1})\}  +   \Delta \bm{S}_i(\bm{z}') \bm{p}_0(\bm{z})\{\hat{g}_{\bm{z}}(\bm{\Lambda}_i)-\hat{g}_{\bm{z}}(\bm{\Lambda}_{i-1})\}.
\end{align*}
By  similar calculations as in the proof of Lemma \ref{lemma:preface2}, we have
\begin{align*}
&\mathbb{E}\left\{\Delta \bm{S}_i(\bm{z})\Delta \bm{S}_i(\bm{z}') \mid \mathcal{F}_{i-1} \right\} = \mathbb{1}(\bm{z}=\bm{z}') \bm{p}_0(\bm{z}),\\
&\mathbb{E}\left[ \Delta \bm{S}_i(\bm{z}) \{\hat{g}_{\bm{z}'}(\bm{\Lambda}_i)-\hat{g}_{\bm{z}'}(\bm{\Lambda}_{i-1})\} \mid \mathcal{F}_{i-1}
    \right]
    = \bm{p}_0(\bm{z}) \left\{ H_{\bm{z}',\bm{z}}^{-}(\bm{\Lambda}_{i-1})-   \bar{g}_{\bm{z}}(\bm{\Lambda}_{i-1}) \hat{g}_{\bm{z}'}(\bm{\Lambda}_{i-1})\right\},\\
  &\mathbb{E}\left[  \Delta \bm{S}_i(\bm{z}') \{\hat{g}_{\bm{z}}(\bm{\Lambda}_i)-\hat{g}_{\bm{z}}(\bm{\Lambda}_{i-1})\} \mid \mathcal{F}_{i-1}
    \right] 
    =\bm{p}_0(\bm{z}') \left\{ H_{\bm{z},\bm{z}'}^{-}(\bm{\Lambda}_{i-1})-   \bar{g}_{\bm{z}'}(\bm{\Lambda_{i-1}}) \hat{g}_{\bm{z}}(\bm{\Lambda}_{i-1})\right\},\\
&\mathbb{E}\left[   \{\hat{g}_{\bm{z}}(\bm{\Lambda}_i)-\hat{g}_{\bm{z}}(\bm{\Lambda}_{i-1})\} \{\hat{g}_{\bm{z}'}(\bm{\Lambda}_i)-\hat{g}_{\bm{z}'}(\bm{\Lambda}_{i-1})\} \mid \mathcal{F}_{i-1}
    \right] 
    = 
    \mathbb{E}\left\{\hat{g}_{\bm{z}}(\bm{\Lambda}_i)\hat{g}_{\bm{z}'}(\bm{\Lambda}_i) \mid \mathcal{F}_{i-1}\right\} \\
&    - \{\bm{P}_{\bm{\Lambda}} \hat{g}_{\bm{z}}(\bm{\Lambda}_{i-1})\}
     \hat{g}_{\bm{z}'}(\bm{\Lambda}_{i-1})
     -
     \{\bm{P}_{\bm{\Lambda}} \hat{g}_{\bm{z}'}(\bm{\Lambda}_{i-1})\}
     \hat{g}_{\bm{z}}(\bm{\Lambda}_{i-1})
    + \hat{g}_{\bm{z}}(\bm{\Lambda}_{i-1})\hat{g}_{\bm{z}'}(\bm{\Lambda}_{i-1}).
\end{align*}

By the Poisson equation in  Lemma \ref{lemma:relevant properties2}(i), we have
\begin{align*}
    &\bm{P}_{\bm{\Lambda}} \hat{g}_{\bm{z}}(\bm{\Lambda}_{i-1}) = \hat{g}_{\bm{z}}(\bm{\Lambda}_{i-1}) - \bar{g}_{\bm{z}}(\bm{\Lambda}_{i-1}), \quad \bm{P}_{\bm{\Lambda}} \hat{g}_{\bm{z}'}(\bm{\Lambda}_{i-1}) = \hat{g}_{\bm{z}'}(\bm{\Lambda}_{i-1}) - \bar{g}_{\bm{z}'}(\bm{\Lambda}_{i-1}).
\end{align*}

Now putting the above calculations together, we have
\begin{align*}
    &\mathbb{E}\left( \Delta  M_{i,\bm{z}} \Delta  M_{i,\bm{z}'}\mid \mathcal{F}_{i-1}\right)
    = \mathbb{1}(\bm{z}=\bm{z}') \bm{p}_0(\bm{z}) +  \bm{p}_0(\bm{z})\bm{p}_0(\bm{z}')\{H_{\bm{z}',\bm{z}}^{-}(\bm{\Lambda}_{i-1})+H_{\bm{z},\bm{z}'}^{-}(\bm{\Lambda}_{i-1})\}\\
    +&\bm{p}_0(\bm{z})\bm{p}_0(\bm{z}')\left[ \mathbb{E}\left\{\hat{g}_{\bm{z}}(\bm{\Lambda}_i)\hat{g}_{\bm{z}'}(\bm{\Lambda}_i) \mid \mathcal{F}_{i-1}\right\}-
    \hat{g}_{\bm{z}}(\bm{\Lambda}_{i-1})\hat{g}_{\bm{z}'}(\bm{\Lambda}_{i-1})
    \right].
\end{align*}

Due to  Lemma \ref{lemma:relevant properties1},
by the Ergodic Theorem \cite[Theorem 1.10.2]{norris1998markov}, we have that as $n \to \infty$, 
\begin{align*}
    n^{-1} \sum_{i=1}^{n}\{H_{\bm{z}',\bm{z}}^{-}(\bm{\Lambda}_{i-1})+H_{\bm{z},\bm{z}'}^{-}(\bm{\Lambda}_{i-1})\} \to \pi (H_{\bm{z}',\bm{z}}^{-}+H_{\bm{z},\bm{z}'
    }^{-}), \quad \text{in probability}.
\end{align*}
Then the proof is complete due to Lemma \ref{lemma:aux_bnd_md}
below.
\end{proof}

\begin{mylem} \label{lemma:aux_bnd_md}
Let $\bm{z},\bm{z'} \in \mathcal{Z}$ be fixed. As $n \to \infty$,
\begin{align*}
    n^{-1} \sum_{i=1}^{n}\left[ \mathbb{E}\left\{\hat{g}_{\bm{z}}(\bm{\Lambda}_i)\hat{g}_{\bm{z}'}(\bm{\Lambda}_i) \mid \mathcal{F}_{i-1}\right\} -
    \hat{g}_{\bm{z}}(\bm{\Lambda}_{i-1})\hat{g}_{\bm{z}'}(\bm{\Lambda}_{i-1})
    \right] \;\; \to \;\; 0, \quad \text{in probability}.
\end{align*}
\end{mylem}
\begin{proof}
Denote by $\Delta G_i = \hat{g}_{\bm{z}}(\bm{\Lambda}_{i})\hat{g}_{\bm{z}'}(\bm{\Lambda}_{i}) - \mathbb{E}\left\{\hat{g}_{\bm{z}}(\bm{\Lambda}_i)\hat{g}_{\bm{z}'}(\bm{\Lambda}_i) \mid \mathcal{F}_{i-1}\right\}$
 for $i \geq 1$.
Then $\{\Delta G_i: i \geq 1\}$ is a $\{\mathcal{F}_{i}:i\geq 1\}$-martingale difference sequence. Note that
\begin{align*}
&n^{-1}\sum_{i=1}^{n}\left[ \mathbb{E}\left\{\hat{g}_{\bm{z}}(\bm{\Lambda}_i)\hat{g}_{\bm{z}'}(\bm{\Lambda}_i) \mid \mathcal{F}_{i-1}\right\} -\hat{g}_{\bm{z}}(\bm{\Lambda}_{i-1})\hat{g}_{\bm{z}'}(\bm{\Lambda}_{i-1})
    \right] \\
    =& 
n^{-1}\left[\mathbb{E}\left\{\hat{g}_{\bm{z}}(\bm{\Lambda}_n)\hat{g}_{\bm{z}'}(\bm{\Lambda}_{n})\mid\mathcal{F}_{n-1}\right\}-\hat{g}_{\bm{z}}(\bm{\Lambda}_0)\hat{g}_{\bm{z}'}(\bm{\Lambda}_0) \right] -
n^{-1}\sum_{i=1}^{n-1} \Delta G_i.
\end{align*}

First, by Jensen's inequality for conditional expectations, and due to Lemma \ref{lemma:relevant properties2}(ii),
\begin{align*}
&n^{-2}\mathbb{E}\left[\mathbb{E}\left\{\hat{g}_{\bm{z}}(\bm{\Lambda}_n)\hat{g}_{\bm{z}'}(\bm{\Lambda}_{n})\mid\mathcal{F}_{n-1}\right\}-\hat{g}_{\bm{z}}(\bm{\Lambda}_0)\hat{g}_{\bm{z}'}(\bm{\Lambda}_0)\right]^2 
\\
\leq &  n^{-2} \left( \mathbb{E}\left[\{\hat{g}_{\bm{z}}(\bm{\Lambda}_n)\}^4
 \right]+ 
 \mathbb{E}\left[\{\hat{g}_{\bm{z}'}(\bm{\Lambda}_n)\}^4
 \right]+
\mathbb{E}\left[\{\hat{g}_{\bm{z}}(\bm{\Lambda}_0)\}^4\right]+
\mathbb{E}\left[\{\hat{g}_{\bm{z}'}(\bm{\Lambda}_0)\}^4\right] \right)\;\to\; 0,
\end{align*}
which implies that $n^{-1}\left[\mathbb{E}\left\{\hat{g}_{\bm{z}}(\bm{\Lambda}_n)\hat{g}_{\bm{z}'}(\bm{\Lambda}_{n})\mid\mathcal{F}_{n-1}\right\}-\hat{g}_{\bm{z}}(\bm{\Lambda}_0)\hat{g}_{\bm{z}'}(\bm{\Lambda}_0) \right]$ converges to zero in probability.

Further, due to the property of martingale differences, 
\begin{align*}
    \mathbb{E}\left\{ \left( n^{-1}\sum_{i=1}^{n-1} \Delta G_i \right)^2\right\} = n^{-2} \sum_{i=1}^{n-1} \mathbb{E}\left\{\left( \Delta G_i \right)^2 \right\} \leq n^{-1} \max_{1 \leq i \leq  n} \mathbb{E}\left\{\left( \Delta G_i \right)^2 \right\}.
\end{align*}
By  similar arguments as above, $n^{-1}\max_{1 \leq i \leq  n} \mathbb{E}\left\{\left( \Delta G_i \right)^2 \right\}\to 0$, which implies that $n^{-1}\sum_{i=1}^{n-1} \Delta G_i$ converges to zero in probability. Then the proof is complete.
\end{proof}

\subsection{Proof of Lemma \ref{lemma:perturbed normality}}
\label{app:proof_general_local}
We first state and prove a lemma, which is a generalization of Lemma \ref{lemma:relevant properties1}(ii) to local alternatives. 
It is also needed in the proof of Theorem \ref{thm:bootstrap_consistency}.

\begin{mylem}\label{app:lemma_bnded_unif_moments}
Let $\bm{p}_n \in \bm{\Delta}_m$ be any sequence of deterministic  probability mass functions such that  $n^{1/2}(\bm{p}_n - \bm{p}_0)$ converges to some vector as $n \to \infty$. Then for any $r > 0$,
and $\bm{z} \in \mathcal{Z}$,
\begin{align*}
    \sup_{n \geq 1} \mathbb{E}_{\bm{p}_n} \left\{|n^{-1/2}\bm{S}_n(\bm{z})|^r\right\} < \infty.
\end{align*}
\end{mylem}
\begin{proof}
By Lemma \ref{lemma:changeofmeasure} and Cauchy-Schwartz inequality, 
\begin{align*}
        &\sup_n  \mathbb{E}_{\bm{p}_n}\left\{|n^{-1/2}\bm{S}_n(\bm{z})|^r\right\}\\ =& \sup_n E_{\bm{p}_0}\left[\exp\{\ell_n(\bm{p}_n;\bm{p}_0)\}|n^{-1/2}\bm{S}_n(\bm{z})|^r\right]
        \\
        \leq& \sup_{n} \left(\mathbb{E}_{\bm{p}_0}\left[\exp\{2\ell_n(\bm{p}_n;\bm{p}_0)\}\right]\right)^{1/2} \cdot\sup_n \left(\mathbb{E}_{\bm{p}_0}\left[\{n^{-1/2}\bm{S}_n(\bm{z})\}^{2r}\right]\right)^{1/2}.
\end{align*}
By Lemma \ref{lemma:relevant properties1}(ii), the second term above is finite. It remains to show that the first term is  finite.

Denote by $\bm{\delta}_n = n^{1/2}(\bm{p}_n-\bm{p}_0)$ for $n \geq 1$;
by definition,   $\sum_{\bm{z}\in\mathcal{Z}}\bm{\delta}_n(\bm{z}) = 0$. Further,
let 
$M =\sup_{n} \sup_{\bm{z} \in \mathcal{Z}}|\bm{\delta}_n(\bm{z})|$;
since $\{\bm{\delta}_n\}$ is convergent, we have $M < \infty$. Finally, due to independence and by definition, we have
\begin{align*}
    &\mathbb{E}_{\bm{p}_0}[\exp\{2\ell_n(\bm{p}_n;\bm{p}_0)\}]
    =\mathbb{E}_{\bm{p}_0}\left[\prod_{i=1}^{n}
    \left\{
    \frac{\bm{p}_0(\bm{Z}_i)+n^{-1/2}\bm{\delta}_n(\bm{Z}_i)}{\bm{p}_0(\bm{Z}_i)}\right\}^2\right]
    \\
    =&\prod_{i=1}^{n}\sum_{\bm{z}\in \mathcal{Z}}\frac{\{\bm{p}_0(\bm{z})+n^{-1/2}\bm{\delta}_n(\bm{z})\}^2}{\bm{p}_0(\bm{z})}
    =\prod_{i=1}^{n}\left\{1+\sum_{\bm{z}\in \mathcal{Z}}\frac{\bm{\delta}_n^2(\bm{z})}{n\bm{p}_0(\bm{z})}\right\} \leq \exp\left\{\sum_{\bm{z}\in \mathcal{Z}}\frac{M^2}{\bm{p}_0(\bm{z})}\right\}.
\end{align*}
Note that the above upper bound does not depend on $n$, and thus   
$\sup_{n} \mathbb{E}_{\bm{p}_0}\left[\exp\{2\ell_n(\bm{p}_n;\bm{p}_0)\}\right]$
is finite, which completes the proof.
\end{proof}

Given the above lemma and Lemma \ref{lemma:joint}, the proof of Lemma \ref{lemma:perturbed normality} is straightforward.

\begin{proof}[Proof of Lemma \ref{lemma:perturbed normality}]
By Lemma \ref{lemma:changeofmeasure}, the log-likelihood ratio of $\bm{p}_n$ against $\bm{p}_0$ for the \textit{joint} sequence $\{\bm{Z}_i, I_i: i \in [n]\}$ is $\ell_n(\bm{p}_n; \bm{p}_0)$. 
Note that $\bm{S}_n$ is a function of $\{\bm{Z}_i, I_i: i \in [n]\}$, and by Lemma \ref{lemma:joint}, $n^{-1/2} \bm{S}_n$
 and $\ell_n(\bm{p}_n; \bm{p}_0)$ converge jointly under $\bm{p}_0$ to a multivariate normal distribution. Due to the structure of the limiting distribution and by Le Cam's third lemma \cite[Theorem 6.6]{van},
we have 
$
n^{-1/2}\bm{S}_n\Rightarrow_{\bm{p}_n} \mathcal{N}_{m}(\bm{0},\bm{\Sigma}_\infty(\bm{p}_0))
$.

By Lemma \ref{app:lemma_bnded_unif_moments}, for each $\bm{z} \in \mathcal{Z}$, we have
$
\sup_n  \mathbb{E}_{\bm{p}_n}\left[\{n^{-1/2}\bm{S}_n(\bm{z})\}^4\right]<\infty
$. 
Then, if we
let $\bm{\Theta}$ be a random vector with the distribution  $\mathcal{N}_{m}(\bm{0},\bm{\Sigma}_\infty(\bm{p}_0))$, by the convergence of moments \cite[Theorem 2.20]{van}, we have as $n \to \infty$,
\begin{align*}
    \bm{\Sigma}_{n}(\bm{p}_n) = \mathbb{E}_{\bm{p}_n}\left\{(n^{-1/2}\bm{S}_n) (n^{-1/2} \bm{S}_n)^T\right\} \;\; \to \;\; \mathbb{E}\left( \bm{\Theta} \bm{\Theta}^T\right) = \bm{\Sigma}_{\infty}(\bm{p}_0),
\end{align*}
where  the first equality  is because  $\mathbb{E}_{\bm{p}_n}[\bm{S}_n] = \bm{0}_m$ due to symmetry. The proof is complete.
\end{proof}

\subsection{Proof of Theorem \ref{thm:bootstrap_consistency}}\label{app:proof_bootstrap_consistency}

We first state and prove a lemma, which may be viewed as the stochastic version of Lemma \ref{app:lemma_bnded_unif_moments}.

\begin{mylem}\label{app:lemma_bound_in_prob}
Let $\hat{\bm{p}}_n \in \bm{\Delta}_m$ be a sequence of estimators such that $n^{1/2}(\hat{\bm{p}}_n - \bm{p}_0)$ converges in distribution as $n \to \infty$. For any $r > 0$ and each $\bm{z} \in \mathcal{Z}$, as $n \to \infty$,
$\mathbb{E}_{\hat{\bm{p}}_n}\left\{|n^{-1/2}\bm{S}_n(\bm{z})|^r\right\}$ is bounded in probability.
\end{mylem}

\begin{proof}
Fix some $r > 0$ and $\bm{z} \in \mathcal{Z}$. Define $\zeta_{n}(\bm{p}) = \mathbb{E}_{\bm{p}}(|n^{-1/2}\bm{S}_n(\bm{z})|^r)$ 
for $n \geq 1$ and $\bm{p} \in \bm{\Delta}_m$.

Due to Lemma \ref{lemma:joint} and Lemma  \ref{app:lemma_bnded_unif_moments}, by similar arguments as for Lemma \ref{lemma:perturbed normality}, for any sequence of deterministic $\bm{p}_n \in \bm{\Delta}_m$ such that $n^{1/2}(\bm{p}_n-\bm{p}_0)$ is convergent, we have
$$
\zeta_{n}(\bm{p}_n) \to \mathbb{E}\left(|\mathcal{N}_{1}(0, \sigma_{\bm{z}}^2)|^r\right),
$$
where  $\mathcal{N}_{1}(0, \sigma_{\bm{z}}^2)$ denotes a random variable that has a normal distribution with mean zero and variance $\sigma_{\bm{z}}^2$, which is the $(\bm{z},\bm{z})$-th entry of $\bm{\Sigma}_{\infty}(\bm{p}_0)$.
Then by the same argument as for Theorem \ref{theorem:main}, we have 
$
\zeta_{n}(\hat{\bm{p}}_n) \to \mathbb{E}\left(|\mathcal{N}_{1}(0, \sigma_{\bm{z}}^2)|^r\right)$ in probability, 
which completes the proof.
\end{proof}

\begin{proof}[Proof of Theorem \ref{thm:bootstrap_consistency}]
For $\bm{z},\bm{z}' \in \mathcal{Z}$, denote by $\bm{\Sigma}_{n}(\hat{\bm{p}}_n; \bm{z},\bm{z}')$ the $(\bm{z},\bm{z}')$-th entry of $\bm{\Sigma}_n(\hat{\bm{p}}_n)$ and by $\hat{\bm{\mu}}^{B}(\hat{\bm{p}}_n;\bm{z})$ the $\bm{z}$-th entry of $\hat{\bm{\mu}}^{B}(\hat{\bm{p}}_n)$. Denote for each $b \in [B]$ and $n \geq 1$, $\tilde{\bm{S}}_n^{(b)} = n^{-1/2}{\bm{S}}_n^{(b)}$.

Note the following decomposition
$B^{-1}(B-1)\hat{\bm{\Sigma}}_n^{B}(\hat{\bm{p}}_n) - \bm{\Sigma}_{\infty}(\bm{p}_0) = \bm{I}_{n,B} +
\bm{II}_{n}$,
 where 
$$
\bm{I}_{n,B} = \frac{1}{B} \sum_{b=1}^{B} \left\{\tilde{\bm{S}}_n^{(b)} (\tilde{\bm{S}}_n^{(b)})^\text{T} - \bm{\Sigma}_n(\hat{\bm{p}}_n)\right\} - \hat{\bm{\mu}}^{B}(\hat{\bm{p}}_n)\{\hat{\bm{\mu}}^{B}(\hat{\bm{p}}_n)\}^\text{T},
$$
and
$\bm{II}_{n} = {\bm{\Sigma}}_n(\hat{\bm{p}}_n) - \bm{\Sigma}_{\infty}(\bm{p}_0)$. 
By Theorem \ref{theorem:main},  $\bm{II}_{n}$ converges to zero in probability as $n \to \infty$. Thus, it suffices to show that for each $\bm{z},\bm{z}' \in \mathcal{Z}$,
$\bm{I}_{n,B}(\bm{z},\bm{z}') \to 0 \text{ in probability}$. Fix some $\bm{z},\bm{z}' \in \mathcal{Z}$, and arbitrary $\eta > 0$.

Since $\mathbb{P}\{|\bm{I}_{n,B}(\bm{z},\bm{z}')| \geq \eta\} =\mathbb{E}\left[ \mathbb{P}\left\{|\bm{I}_{n,B}(\bm{z},\bm{z}')| \geq \eta \mid \mathcal{F}_n\right\}\right]
 $, by dominated convergence theorem, it suffices to show that $\mathbb{P}\left[|\bm{I}_{n,B}(\bm{z},\bm{z}')| \geq \eta \mid \mathcal{F}_n\right] \to 0$ in probability. Then, by Markov inequality, it suffices to show that $\mathbb{E}\left[\{\bm{I}_{n,B}(\bm{z},\bm{z}')\}^2   \mid \mathcal{F}_n\right] \to 0$ in probability.

Observe that conditional on $\mathcal{F}_n$ (in particular, on $\hat{\bm{p}}_n$), $\{\tilde{\bm{S}}_n^{(b)}: b \in [B]\}$ are independent and identically distributed with $\mathbb{E}\left\{\tilde{\bm{S}}_n^{(1)}(\bm{z})\tilde{\bm{S}}_n^{(1)}(\bm{z}')\mid \mathcal{F}_{n}\right\}=\bm{\Sigma}_{n}(\hat{\bm{p}}_n; \bm{z},\bm{z}')$. Using $(a+b)^2 \leq 2a^2 + 2b^2$ and 
$2ab \leq a^2+b^2$, we have
\begin{align*}
    &\mathbb{E}\left[\{\bm{I}_{n,B}(\bm{z},\bm{z}')\}^2   \mid \mathcal{F}_n\right] \\ \leq& 
    2  B^{-1}\mathbb{E}\left[ \{\tilde{\bm{S}}_n^{(1)}(\bm{z})\tilde{\bm{S}}_n^{(1)}(\bm{z}')\}^2  \mid \mathcal{F}_n \right] + 
    2\mathbb{E}\left[\left\{\hat{\bm{\mu}}^{B}(\hat{\bm{p}}_n;\bm{z}) \hat{\bm{\mu}}^{B}(\hat{\bm{p}}_n;\bm{z}')\right\}^2  \mid \mathcal{F}_n\right] \\
    \leq& \;\; 
    B^{-1} \mathbb{E}\left[ \left\{\tilde{\bm{S}}_n^{(1)}(\bm{z})\right\}^4 + \left\{\tilde{\bm{S}}_n^{(1)}(\bm{z}')\right\}^4 \mid \mathcal{F}_n \right]
    + \mathbb{E}\left[\left\{\hat{\bm{\mu}}^{B}(\hat{\bm{p}}_n;\bm{z})\right\}^4   + \left\{\hat{\bm{\mu}}^{B}(\hat{\bm{p}}_n;\bm{z}')\right\}^4  \mid \mathcal{F}_n\right].
\end{align*}

Due to symmetry, $\mathbb{E}\{\tilde{\bm{S}}_n^{(1)}(\bm{z}) \mid \mathcal{F}_{n}\} = 0$, and thus 
 \begin{align*}
     &\mathbb{E}\left[\left\{\hat{\bm{\mu}}^{B}(\hat{\bm{p}}_n;\bm{z})\right\}^4 \mid \mathcal{F}_n\right] \\=& B^{-3}  \mathbb{E}\left[ \left\{\tilde{\bm{S}}_n^{(1)}(\bm{z})\right\}^4 \mid \mathcal{F}_n \right] + 3B^{-3}(B-1)  \mathbb{E}\left[ \left\{\tilde{\bm{S}}_n^{(1)}(\bm{z})\tilde{\bm{S}}_n^{(2)}(\bm{z})\right\}^2 \mid \mathcal{F}_n \right]\\
    \leq& \;\;  4B^{-2}\mathbb{E}\left[ \left\{\tilde{\bm{S}}_n^{(1)}(\bm{z})\right\}^4 \mid \mathcal{F}_n \right] = 4B^{-2}\mathbb{E}_{\hat{\bm{p}}_n}\left[ \{{n^{-1/2}\bm{S}}_n(\bm{z})\}^4\right].
\end{align*}
Combining the above calculations, we have
\begin{align*}
    \mathbb{E}\left[\{\bm{I}_{n,B}(\bm{z},\bm{z}')\}^2   \mid \mathcal{F}_n\right] \leq 5B^{-1}\mathbb{E}_{\hat{\bm{p}}_n}\left[ \{n^{-1/2}\bm{S}_n(\bm{z})\}^4\right]+
    5B^{-1}\mathbb{E}_{\hat{\bm{p}}_n}\left[ \{n^{-1/2}\bm{S}_n(\bm{z}')\}^4\right].
\end{align*}
By Lemma \ref{app:lemma_bound_in_prob}, 
both $\mathbb{E}_{\hat{\bm{p}}_n}\left[ \{n^{-1/2}\bm{S}_n(\bm{z})\}^4\right]$ and 
$\mathbb{E}_{\hat{\bm{p}}_n}\left[ \{n^{-1/2}\bm{S}_n(\bm{z}')\}^4\right]$
are bounded in probability, 
which implies that as $\min\{n,B\} \to \infty$, $\mathbb{E}\left[\{\bm{I}_{n,B}(\bm{z},\bm{z}')\}^2   \mid \mathcal{F}_n\right] \to 0$ in probability. The proof is complete.
\end{proof}

\subsection{Justification for \eqref{def:specialcov}} \label{app:matrix}

In this subsection, we provide a proof for \eqref{def:specialcov}, that is, when there are two stratification factors each with two levels, i.e., $m = 2 \times 2$,  $\bm{\Sigma}_\infty(\bm{p}_0)$ has a form given in \eqref{def:specialcov}, for a general probabilistic mass function $\bm{p}_0$ and function $g(\cdot)$ in Step 3 of the minimization method in Subsection \ref{subsec:designnotationsandprocedure}.

When $m = 2 \times 2$, the strata in $\mathcal{Z}$ can be enumerated as
$\bm{z}_1=(0,0),\bm{z}_2=(0,1),\bm{z}_3=(1,0),\bm{z}_4=(1,1)$. By \citet[Corollary 3.2(b)]{hu}, the marginal balances are bounded in probability, that is,
\begin{align*}
    &\bm{S}_n(\bm{z}_1) + \bm{S}_n(\bm{z}_2) = O_{\bm{p}_0}(1), \quad
    \bm{S}_n(\bm{z}_1) + \bm{S}_n(\bm{z}_3) = O_{\bm{p}_0}(1),\\
    &\bm{S}_n(\bm{z}_2) + \bm{S}_n(\bm{z}_4) = O_{\bm{p}_0}(1), \quad
    \bm{S}_n(\bm{z}_3) + \bm{S}_n(\bm{z}_4) = O_{\bm{p}_0}(1).
\end{align*}
which implies that
\begin{align*}
    &n^{-1/2}\bm{S}_n(\bm{z}_2) = -n^{-1/2}\bm{S}_n(\bm{z}_1) + o_{\bm{p}_0}(1), \quad
    n^{-1/2}\bm{S}_n(\bm{z}_3) = -n^{-1/2}\bm{S}_n(\bm{z}_1) + o_{\bm{p}_0}(1),\\
&n^{-1/2}\bm{S}_n(\bm{z}_4) =  n^{-1/2}\bm{S}_n(\bm{z}_1) + o_{\bm{p}_0}(1).
\end{align*}
That is, $n^{-1/2}\bm{S}_n = n^{-1/2} \bm{S}_n(\bm{z}_1)(1,-1,-1,1)^\text{T} + o_{\bm{p}_0}(1)$, which completes the proof.

\section{Notations related to survival model} \label{app survival}
In this section, we present the expressions for $\psi$ and $\bm{G}$ in \eqref{def:R_size} for readers' convenience, which are introduced in \cite{ye}. Recall the setup and notations in Section \ref{sec:survival}, and
define random variables $O_{ij}$ for $i \in [n]$ and $j=0,1$ as follows
\begin{align*}
    O_{ij} &= \int_{0}^{\infty} \left\{\mu(t)^{1-I_i}-I_i\mu(t)\right\}\left\{dN_{ij}(t)-p(t)Y_{ij}(t)\exp\left(\bm{\beta}_*^\text{T}\bm{W}_i\right)dt\right\},
\end{align*}
 where $ \mu(t)=E\{I_i\mid Y_i(t)=1\}$, $p(t) = \mathbb{E}\{Y_i(t)h(t\mid\bm{V}_i,I_i)\}/\mathbb{E}\left\{Y_i(t)\exp\left(\bm{\beta_*^\text{T}W_i}\right)\right\}$, and $\bm{\beta}_*$ is the in-probability limit of 
 the maximum likelihood estimators $\hat{\bm{\beta}}_n$ under $H_0$. Then 
 $\psi = \mathbb{E}\{\text{var}(O_{i1}\mid\bm{Z}_i)+\text{var}(O_{i0}\mid\bm{Z}_i)\}/2$ and $\bm{G} = \{\mathbb{E}(O_{i1} \mid \bm{Z}_i = \bm{z}); \bm{z} \in \mathcal{Z}\}$.

\end{appendix}

\end{document}